\numberwithin{equation}{section}
\numberwithin{figure}{section}
\numberwithin{table}{section}
\newcommand{\key}[1]{\par\vskip 0.5em\small{\slshape Key words:\/} #1}
\newcommand{\msc}[1]{\par\vskip 0.5em\small{\slshape Mathematical Subject Classification 2010:\/} #1}
\newcommand\arXiv[1]{\url{https://arxiv.org/abs/#1}}
\theoremstyle{plain}
\newtheorem{theorem}{Theorem}
\newtheorem{corollary}[theorem]{Corollary}
\newtheorem{proposition}[theorem]{Proposition}
\newtheorem{lemma}[theorem]{Lemma}
\theoremstyle{definition}
\theoremstyle{remark}
\newtheorem{remark}[theorem]{Remark}
\title
{On the Correlation Functions of the Characteristic Polynomials of Random Matrices with Independent Entries: Interpolation Between Complex and Real Cases}
\author{Ievgenii Afanasiev\footnote{B. Verkin Institute for Low Temperature Physics and Engineering of the National Academy of Sciences of Ukraine, 47 Nauky Ave., Kharkiv, 61103, Ukraine. E-mail: \href{mailto:afanasiev@ilt.kharkov.ua}{afanasiev@ilt.kharkov.ua}, \href{mailto:afanasiev@ilt.kharkov.ua}{ie.afanasiev@gmail.com}}}
\begin{document}
\maketitle



\newcommand{\Compl}{\mathbb{C}}
\newcommand{\R}{\mathbb{R}}
\newcommand{\Z}{\mathbb{Z}}

\newcommand{\conseq}{\Rightarrow}

\newcommand{\Mat}{\mathrm{Mat}}
\newcommand{\conj}[1]{\overline{#1}}
\newcommand{\transp}[1]{#1^T}
\newcommand{\rev}[1]{\dual{#1}}
\newcommand{\dual}[1]{#1^R}
\newcommand{\cConjScl}[1]{\bar{#1}}
\newcommand{\cConjMat}[1]{#1^*}
\newcommand{\aConjScl}[1]{#1^*}
\newcommand{\aConjMat}[1]{#1^+}
\newcommand{\ds}[1]{\check{#1}}

\newcommand{\setcharf}[1]{\mathbbm{1}_{#1}}
\newcommand{\charfp}{\psi}
\newcommand{\asKer}[2]{K\left(#1, #2\right)}
\newcommand{\CF}{\mathsf{f}}

\newcommand{\Gin}{\mathrm{Gin}}
\newcommand{\ens}{M_n}
\newcommand{\sclA}{y}
\newcommand{\matA}{Y}

\newcommand{\cSclGin}{x}
\newcommand{\cMatGin}{X}
\newcommand{\cMatPos}{\mathcal{Z}}
\newcommand{\cSclN}{t}
\newcommand{\cVecN}{\bm{\cSclN}}
\newcommand{\cSclA}{q}
\newcommand{\cMatA}{Q}
\newcommand{\cSetMatA}{\bm{Q}}
\newcommand{\cVecGI}{\bm{h}}
\newcommand{\cSclB}{a}

\newcommand{\aSclA}{\xi}
\newcommand{\aVecA}{\bm{\aSclA}}
\newcommand{\aMatA}{\Xi}
\newcommand{\aSetMatA}{\mathbf{\aMatA}}
\newcommand{\aSclB}{\phi}
\newcommand{\aVecB}{\bm{\aSclB}}
\newcommand{\aMatB}{\Phi}
\newcommand{\aSclBt}{\varphi}
\newcommand{\aVecBt}{\aVecB}
\newcommand{\aSclC}{\theta}
\newcommand{\aVecC}{\bm{\aSclC}}
\newcommand{\aMatC}{\Theta}
\newcommand{\aSclCt}{\vartheta}
\newcommand{\aVecCt}{\aVecC}
\newcommand{\aSclD}{\rho}
\newcommand{\aVecD}{\bm{\aSclD}}
\newcommand{\aSclE}{\tau}
\newcommand{\aSclF}{\nu}
\newcommand{\aVecF}{\bm{\aSclF}}
\newcommand{\aSclG}{\upsilon}
\newcommand{\aVecG}{\bm{\aSclG}}
\newcommand{\aMatG}{\Upsilon}
\newcommand{\aSclH}{\aSclE}
\newcommand{\aVecH}{\bm{\aSclH}}
\newcommand{\aVecGrI}{\bm{\upsilon}}

\newcommand{\tempcumul}{\kappa}
\newcommand{\cumul}[2]{\tempcumul_{#1,#2}}
\newcommand{\realcumul}{\tempcumul}
\newcommand{\seccumul}{\cumul{2}{0}}
\newcommand{\indexset}{\mathcal{I}}
\newcommand{\emptyindex}{\varnothing}
\newcommand{\diLambda}{\mathcal{M}}
\newcommand{\stpointsnbh}{\Omega_n}
\newcommand{\idom}{\mathcal{D}}
\newcommand{\Vanddet}{\triangle}
\newcommand{\herm}{\mathcal{H}}
\newcommand{\USp}{\mathrm{USp}}
\newcommand{\scoeff}{\mathsf{c}}
\newcommand{\vol}{\mathrm{vol}}
\newcommand{\skB}{\mathcal{B}}
\newcommand{\Hess}{\mathsf{H}}
\newcommand{\Fperm}{\mathcal{F}}

\newcommand{\partder}[2]{\frac{\partial #1}{\partial #2}}
\newcommand{\der}[2]{\frac{d #1}{d #2}}

\newcommand{\abs}[1]{\left\lvert#1\right\rvert}
\newcommand{\abssized}[2][ ]{#1\lvert#2#1\rvert}
\newcommand{\norm}[1]{\left\lVert#1\right\rVert}
\newcommand{\normsized}[2][ ]{#1\lVert#2#1\rVert}

\newcommand{\tr}{\mathop{\mathrm{tr}}}
\newcommand{\Pf}{\mathop{\mathrm{Pf}}}
\newcommand{\E}{\operatorname{\mathbf{E}}}
\newcommand{\diag}{\mathop{\mathrm{diag}}}



\begin{abstract}
The paper is concerned with the correlation functions of the characteristic polynomials of random matrices with independent complex entries. We investigate how the asymptotic behavior of the correlation functions depends on the second moment of the common probability law of the matrix entries, a sort of ``reality measure'' of the entries. It is shown that the correlation functions behave like that for the Complex Ginibre Ensemble up to a~factor depending only on the second moment and the fourth absolute moment of the common probability law of the matrix entries.

\key{Random matrix theory, Ginibre ensemble, correlation functions of characteristic polynomials, moments of characteristic polynomials, SUSY.}

\msc{60B20, 15B52.}
\end{abstract}


\section{Introduction}
The ensemble of random matrices with independent entries was introduced by Ginibre in 1965~\cite{Gin:65}. To be exact, he introduced a partial case when entries of the matrices have Gaussian distribution. Anyway, the ensemble appeared to be significant and has been attracting scientists' attention since that time.

Random matrices with independent entries are usually considered over three fields: complex numbers, real numbers and quaternions. An asymptotic behavior of the correlation functions of the characteristic polynomials was recently computed in the complex case~\cite{Af:19} and in the real case~\cite{Af:20}. The goal of the current article is to obtain a similar result in the intermediate case between the complex and the real ones.

Let us proceed to precise definitions. We consider the matrices of the form
\begin{equation} \label{Gin ens}
\ens = \frac{1}{\sqrt{n}}\cMatGin = \frac{1}{\sqrt{n}}(x_{jk})_{j,k = 1}^n,
\end{equation}
where $x_{jk}$ are i.i.d.\ complex random variables such that
\begin{equation}\label{moments}
\E\{x_{jk}\} = 0, \quad \E\{\abs{x_{jk}}^2\} = 1, \quad \E\lbrace x_{jk}^2 \rbrace =: \seccumul.
\end{equation}
Here and everywhere below $\E{}$ denotes an expectation with respect to (w.r.t.)\ all random variables. In the particular case if the entries $x_{jk}$ are complex or real Gaussian this ensemble is known as Complex or Real Ginibre Ensemble respectively ($\Gin(\Compl)$ resp.\ $\Gin(\R)$). The parameter $\seccumul$ plays a role of a ``reality measure''. Indeed, on the one hand $\seccumul = 0$ in the complex case. Om the other hand $\seccumul = 1$ in the real case.

Notice that the ensemble~\eqref{Gin ens} has various applications in physics, neuroscience, economics, etc. For detailed information see~\cite{Ak-Ka:07} and references therein.


Define the Normalized Counting Measure (NCM) of eigenvalues as
\begin{equation*}
	N_n(\Delta) = \#\{\lambda_j^{(n)} \in \Delta,\, j = 1, \ldots, n\}/n, 
\end{equation*}
where $\Delta$ is an arbitrary Borel set in the complex plane, $\left\{\lambda_j^{(n)}\right\}_{j = 1}^n$ are the eigenvalues of~$\ens$. The NCM is known to converge to the uniform distribution on the unit disc. This distribution is called the circular law. This result has a long and rich history. Mehta was the first who obtained it for $x_{jk}$ being complex Gaussian in 1967~\cite{Me:67}. The proof strongly relied on the explicit formula for the common probability density of eigenvalues due to Ginibre~\cite{Gin:65}. Unfortunately, there is no such a formula in the general case. That is why other methods have to be used. The Hermitization approach introduced by Girko~\cite{Gir:84} appeared to be an effective method. The main idea is to reduce the study of matrices~\eqref{Gin ens} to the study of Hermitian matrices using the logarithmic potential of a measure
\begin{equation*}
	P_\mu(z) = \int\limits_\Compl \log \abs{z - \zeta}\, d\mu(\zeta).
\end{equation*}
This approach was successfully developed by Girko in the next series of works~\cite{Gir:94, Gir:04_1, Gir:04_2, Gir:05}. The final result in the most general case was established by Tao and Vu~\cite{Ta-Vu:10}. Notice that there are a lot of partial results besides those listed above. The interested reader is directed to~\cite{Bo-Ch:12}.

The Central Limit Theorem (CLT) for linear statistics of non-Hermitian random matrices of the form~\eqref{Gin ens} was first proven for radial-invariant test functions in the complex case by Forrester~\cite{Fo:99}. The study was continued in the complex case by Rider and Silverstein~\cite{Ri-Si:06}, Rider and Virag~\cite{Ri-Vi:07}, in the real case by O'Rourke and Renfrew~\cite{OR-Re:16}, in both cases by Tao and Vu~\cite{Ta-Vu:15} and Kopel \cite{Ko:15}. The best result for today was obtained by Cipolloni, Erd{\H o}s and Schr\"oder for the complex case in~\cite{Ci-Er-Sc:21_cCLT} and for the real case in~\cite{Ci-Er-Sc:21_rCLT}. They proved CLT for a bit more than twice differentiable test functions assuming that the common distribution of matrix entries has finite moments. 

A local regime for matrices \eqref{Gin ens} is worse studied. The asymptotic behavior of the $k$-point correlation function for Ginibre ensembles is well-known, see \cite{Gin:65, Me:67} for $\Gin(\Compl)$ and \cite{Ed:97,Fo-Na:07,Bo-Si:09} for $\Gin(\R)$. A general distribution case was considered in~\cite{Ta-Vu:15}. It was established in both cases that the $k$-point correlation function converges in vague topology to that for Ginibre ensemble if $x_{jk}$ having the first four moments as in the Gaussian case. The condition of matching moments was recently overcome at the edge of the spectrum (i.e.\ at $\abs{z} = 1$) in~\cite{Ci-Er-Sc:22}. The last result strongly relies on an estimate for the least singular value obtained in~\cite{Ci-Er-Sc:20} using the  supersymmetry technique (SUSY).

One can observe that non-Hermitian random matrices are more complicated than their Hermitian counterparts. Indeed, the Hermitian case was successfully dealt with using the Stieltjes transform or the moments method. However, a measure in the plane can not be recovered from its Stieltjes transform or its moments. Thus these approaches to the analysis fail in the non-Hermitian case. 

The present article suggests to use the 
SUSY
. It is a rather powerful method which is widely applied at the physical level of rigor (for instance \cite{Br-Hi:00, Br-Hi:01, Fy:02, Fy-St:03, Zi:06, Fy-Mi:91, Mi-Fy:91}). There are also a lot of rigorous results, which were obtained using SUSY in the recent years, e.g.\ \cite{Di-Lo-So:21, Di-Sp-Zi:10, ShM-ShT:16, ShM-ShT:17, ShM-ShT:18, Ci-Er-Sc:20, Ba-Er:17, Di-La:17, Di-La:20, Di-Me-Ro:14, Sha:13} etc. Supersymmetry technique is usually used in order to obtain an integral representation for ratios of determinants. Since the main spectral characteristics such as density of states, spectral correlation functions, etc.\ often can be expressed via ratios of determinants, SUSY allows to get the integral representation for these characteristics too. For detailed discussion on connection between spectral characteristics and ratios of determinants see \cite{St-Fy:03, Bo-St:06, Gu:15}. See also \cite{Fy-St:03, Re-Ki-Gu-Zi:12}.

Let us consider the second spectral correlation function $R_2$ defined by the equality
\begin{equation*}
	\E\Bigg\{2\sum\limits_{1 \le j_1 < j_2 \le n}\eta\left(\lambda_{j_1}^{(n)}, \lambda_{j_2}^{(n)}\right)\Bigg\} = \int\limits_{\Compl^2} \eta(\lambda_1, \lambda_2) R_2(\lambda_1, \lambda_2) d\bar{\lambda}_1 d\lambda_1 d\bar{\lambda}_2 d\lambda_2,
\end{equation*}
where the function $\eta \colon \Compl^2 \to \Compl$ is bounded, continuous and symmetric in its arguments. Using the logarithmic potential, $R_2$ can be represented via ratios of the determinants of $\ens$ with the most singular term of the form
\begin{equation}\label{2detsratio}
\int\limits_{0}^{\varepsilon_0}\int\limits_{0}^{\varepsilon_0}\frac{\partial^2}{\partial \delta_1 \partial \delta_2}\E\Bigg\{
\prod\limits_{j = 1}^2 \frac{\det \left((\ens - z_j)(\ens - z_j)^* + \delta_j\right)}{\det \left((\ens - z_j)(\ens - z_j)^* + \varepsilon_j\right)}
\Bigg\}\Bigg|_{\delta = \varepsilon} d\varepsilon_1 d\varepsilon_2
\end{equation}
The integral representation for \eqref{2detsratio} obtained by SUSY will contain both commuting and anti-commuting variables. Such type integrals are rather difficult to analyse. That is why one would investigate a more simple but similar integral to shed light on the situation. This integral arises from the study of the correlation functions of the characteristic polynomials. Moreover, the correlation functions of the characteristic polynomials are of independent interest. They were studied for many ensembles of Hermitian and real symmetric matrices, for instance \cite{Br-Hi:00, Br-Hi:01, ShT:11, ShT:13, ShM-ShT:17, Af:16} etc. The other result on the asymptotic behavior of the correlation functions of the characteristic polynomials of non-Hermitian matrices of the form $H + i\Gamma$, where $H$ is from Gaussian Unitary Ensemble (GUE) and $\Gamma$ is a fixed matrix of rank $M$, was obtained in \cite{Fy-Kh:99}. The kernel computed there, in the limit of rank $M \to \infty$ of the perturbation $\Gamma$ (taken after matrix size $n \to \infty$) after appropriate rescaling approaches the form \eqref{asympt ker}. It was demonstrated in \cite[Sec. 2.2]{Fy-So:03}.

Let us introduce the $m$\textsuperscript{th} correlation function of the characteristic polynomials
\begin{equation}
\label{F_m}
\CF_m(Z) = \E\Bigg\{
\prod\limits_{j = 1}^m \det \left(\ens - z_j\right)\left(\ens - z_j\right)^*
\Bigg\},
\end{equation}
where 
\begin{equation}\label{Z def}
Z = \diag\{z_1, \dotsc, z_m\}
\end{equation}
and $z_1$, \ldots, $z_m$ are complex parameters which may depend on~$n$. We are interested in the asymptotic behavior of \eqref{F_m}, as $n \to \infty$, for
\begin{equation}\label{z_j}
z_j = z_0 + \frac{\zeta_j}{\sqrt{n}}, \quad j = 1,2,\dotsc,m,
\end{equation}
where $z_0$ is either in the bulk ($\abs{z_0} < 1$) or at the edge ($\abs{z_0} = 1$) of the spectrum and $\zeta_1$, \ldots, $\zeta_m$ are $n$-independent complex numbers. 
The functions~\eqref{F_m} are well-studied for the Complex Ginibre Ensemble, see~\cite{Ak-Ve:03, We-Wo:19}. A general distribution case was considered in \cite{Af:19, Af:20}. It was showed that in the complex case for any $z_0$ in the unit disk
\begin{equation}\label{complex case}
	\lim\limits_{n \to \infty} n^{-\frac{m^2 - m}{2}}\frac{\CF_m(Z)}{\CF_1(z_1)\dotsb \CF_1(z_m)} = 
	e^{\frac{m^2 - m}{2}\left(1 - \abs{z_0}^2\right)^2\cumul{2}{2}} \frac{\det (K_\Compl(\zeta_j, \zeta_k))_{j,k = 1}^m}{\abs{\Vanddet(\cMatPos)}^2},
\end{equation}
where $\cumul{2}{2} = \E\{\abs{x_{11}}^4\} - \abs{\E\lbrace x_{11}^2 \rbrace}^2 - 2$ and 
\begin{align}
\label{asympt ker}
K_\Compl(z, w) &= e^{-\abs{z}^2/2 - \abs{w}^2/2 + z\conj{w}}, \\
\label{cMatPos def}
\cMatPos &= \diag\{\zeta_1, \dotsc, \zeta_m\},
\end{align}
and $\Vanddet(\cMatPos)$ is a Vandermonde determinant of $\zeta_1$, \ldots, $\zeta_m$. Whereas in the real case for any 
$z_0 \in [-1, 1]$
\begin{equation*} 
	\lim\limits_{n \to \infty} n^{-2}\frac{\CF_2(Z)}{\CF_1(z_1)\CF_1(z_2)} = C
	e^{\left(1 - \abs{z_0}^2\right)^2\cumul{2}{2}} \frac{\Pf (K_\R(\zeta_j, \zeta_k))_{j,k = 1}^2}{\Vanddet(\zeta_1, \zeta_2, \cConjScl{\zeta}_1, \cConjScl{\zeta}_2)\vphantom{\tilde{\zeta}}},
\end{equation*}
where
\begin{equation*}
	K_{\R}(\zeta_j, \zeta_k) = e^{-\frac{\abssized[]{\zeta_j}^2}{2} - \frac{\abssized{\vphantom{\zeta_j}\zeta_k}^2}{2}}\begin{pmatrix}
		(\zeta_j - \zeta_k)e^{\zeta_j\zeta_k} & (\zeta_j - \cConjScl{\zeta}_k)e^{\zeta_j\cConjScl{\zeta}_k} \\
		(\cConjScl{\zeta}_j - \zeta_k)e^{\cConjScl{\zeta}_j\zeta_k} & (\cConjScl{\zeta}_j - \cConjScl{\zeta}_k)e^{\cConjScl{\zeta}_j\cConjScl{\zeta}_k}
	\end{pmatrix}.
\end{equation*}
%
%
%
%
In the current paper we extend the results of~\cite{Af:19, Af:20} to the case of arbitrary $\seccumul$, $\abs{\seccumul} \le 1$. The main result is
\begin{theorem}\label{th1}
	Let an ensemble of real random matrices $\ens$ be defined by~\eqref{Gin ens} and~\eqref{moments}. Let the first $2m$ moments ($m > 1$) of the common distribution of entries of $\ens$ be finite and $z_j$, $j = 1, \dotsc, m$, have the form \eqref{z_j}. Let also $z_0$ and $\seccumul$ satisfy at least one of two following conditions
	\begin{enumerate}[label={\upshape(\roman*)},ref={\upshape(\roman*)}]
	\item\label{th1(i)} $\abs{\seccumul} < 1$ and $\abs{z_0} < 1$;
	\item\label{th1(ii)} 
	$\abs{\seccumul} = 1$ and $\abs{z_0} < 1$, $z_0 \notin \R$.
	\end{enumerate}
	Then the $m$\textsuperscript{th} correlation function of the characteristic polynomials~\eqref{F_m} satisfies the asymptotic relation
	\begin{equation}\label{main result1}
		\begin{split}
		\lim\limits_{n \to \infty} n^{-\frac{m^2 - m}{2}} \frac{\CF_m(Z)}{\CF_1(z_1)\dotsb \CF_1(z_m)} 
		= C_{m,z_0}
		e^{d(\seccumul, \cumul{2}{2})} \frac{\det (K_\Compl(\zeta_j, \zeta_k))_{j,k = 1}^m}{\abs{\Vanddet(\cMatPos)}^2},
		\end{split}
	\end{equation}
	where $C_{m,z_0}$ is some constant, which does not depend on the common distribution of entries and on $\zeta_1$, \ldots, $\zeta_m$; 
	$\cumul{2}{2} = \E\{\abs{x_{11}}^4\} - \abs{\E\lbrace x_{11}^2 \rbrace}^2 - 2$,
	\begin{align}
	\label{d def}
	\begin{split}
	d(\seccumul, \cumul{2}{2}) &= -m\log\Bigl\lbrace \abs{1 - \abs{\seccumul}z_0^2}^2 - \abs{\seccumul}^2\bigl(1 - \abs{z_0}^2\bigr)^2\Bigr\rbrace \\
	&\quad{}+ \frac{m^2 - m}{2}\bigl(1 - \abs{z_0}^2\bigr)^2\cumul{2}{2},
	\end{split} 
	\end{align}
	$\Vanddet(\cMatPos)$ is a Vandermonde determinant of $\zeta_1$, \ldots, $\zeta_m$ 
	and $K_\Compl(z, w)$ is defined in~\eqref{asympt ker}.
\end{theorem}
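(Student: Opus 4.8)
The plan is to follow the supersymmetry (SUSY) route of~\cite{Af:19, Af:20}, adapted to a non-vanishing second moment $\seccumul = \E\{x_{11}^2\}$. Since $\det\bigl((\ens-z_j)(\ens-z_j)^*\bigr) = \abs{\det(\ens-z_j)}^2$, I would split each factor of~\eqref{F_m} and use the Gaussian Berezin integral $\det C = \int \exp\bigl(-(\bar\phi, C\phi)\bigr)\,d\bar\phi\,d\phi$ over anticommuting vectors, obtaining
\begin{equation*}
\CF_m(Z) = \E\int \exp\Bigl(-\sum_{j=1}^{m}(\bar\phi_j,(\ens - z_j)\phi_j) - \sum_{j=1}^{m}(\bar\psi_j,(\ens - z_j)^*\psi_j)\Bigr)\,d\Phi ,
\end{equation*}
with $\phi_j,\psi_j$ anticommuting of length $n$ and $d\Phi$ the flat Berezin measure. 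Since $\ens = n^{-1/2}\cMatGin$ enters linearly, the dependence on the entry $x_{ab}$ is through $-n^{-1/2}\bigl(\mathbf a_{ab}x_{ab} + \mathbf b_{ab}\conj{x_{ab}}\bigr)$ with $\mathbf a_{ab} = \sum_j \bar\phi_{j,a}\phi_{j,b}$ and $\mathbf b_{ab} = \sum_j \bar\psi_{j,b}\psi_{j,a}$. As the $x_{ab}$ are i.i.d., the expectation factorizes over $(a,b)$ and $\E\{e^{sx + t\conj{x}}\} = \exp\bigl(\sum_{p+q\ge 2}\tfrac{\cumul{p}{q}}{p!\,q!}\,s^p t^q\bigr)$, the $\cumul{p}{q}$ being the cumulants of $x_{11}$; because $\mathbf a_{ab},\mathbf b_{ab}$ are nilpotent of order $m+1$, only $\cumul{p}{q}$ with $p,q\le m$ occur, i.e.\ the first $2m$ moments are used. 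The $O(n^{-1})$ part of the resulting quartic effective action is carried by $\cumul{1}{1}=1$ (the ``complex'' term $\mathbf a\mathbf b$) and by $\cumul{2}{0}=\seccumul$, $\cumul{0}{2}=\conj{\seccumul}$ (the ``real'' terms $\mathbf a^2$, $\mathbf b^2$), while among the $O(n^{-2})$ and lower-order parts only the one proportional to the fourth cumulant $\cumul{2}{2}$ will survive the limit.

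Next I would re-express $\sum_{a,b}\mathbf a_{ab}\mathbf b_{ab}$, $\sum_{a,b}\mathbf a_{ab}^2$, $\sum_{a,b}\mathbf b_{ab}^2$ through the $m\times m$ arrays of Grassmann bilinears $\sum_a\bar\phi_{j,a}\psi_{k,a}$, $\sum_a\bar\phi_{j,a}\bar\phi_{k,a}$, $\sum_a\psi_{j,a}\psi_{k,a}$, etc., and linearize the quartic action by a Hubbard--Stratonovich transformation, trading it for an integral over auxiliary commuting matrix variables of $n$-independent size; the $\seccumul$-terms adjoin an antisymmetric (``anomalous'') sector absent in the complex case $\seccumul=0$, reflecting the crossover between the unitary and orthogonal symmetry classes. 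The Grassmann integral is then Gaussian and is evaluated exactly, yielding $\CF_m(Z)=\int e^{\,n\mathcal G_0(q)+\mathcal G_1(q;Z)+O(n^{-1})}\,J(q)\,dq$, where $q$ collects the auxiliary variables, $\mathcal G_0$ is a logarithm of a determinant depending only on $z_0$ and $\seccumul$, $\mathcal G_1$ carries the $\zeta_j$'s and the $\cumul{2}{2}$-term, and $J$ is a Jacobian. A Laplace/stationary-phase analysis as $n\to\infty$ then applies. The critical points of $\mathcal G_0$ are fixed by $z_0$ and $\seccumul$, the anomalous components vanishing at the relevant one; the hypotheses \ref{th1(i)} and \ref{th1(ii)} turn out to be exactly equivalent to the strict positivity of
\begin{equation*}
\abs{1 - \abs{\seccumul} z_0^2}^2 - \abs{\seccumul}^2\bigl(1-\abs{z_0}^2\bigr)^2 \;=\; \bigl(1-\abs{\seccumul}^2\bigr) + 2\abs{\seccumul}\,\abs{z_0}^2\bigl(\abs{\seccumul} - \cos(2\arg z_0)\bigr),
\end{equation*}
the argument of the logarithm in~\eqref{d def}, and this positivity is precisely what keeps the saddle isolated with nondegenerate transverse Hessian; when $\abs{\seccumul}=1$ and $z_0\in\R$ the quantity vanishes, the saddle degenerates and the asymptotics changes, whence the exclusion.

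Expanding $\mathcal G_0 + n^{-1}\mathcal G_1 + \dotsb$ to second order about the saddle, the Gaussian fluctuation integral produces the $n$-independent prefactor, the power $n^{(m^2-m)/2}$, and --- since through the rescaling~\eqref{z_j} the $\zeta_j$-dependence enters at the scale of the fluctuations and couples to the ``off-diagonal'' modes --- the ratio $\det(K_\Compl(\zeta_j,\zeta_k))_{j,k=1}^m/\abs{\Vanddet(\cMatPos)}^2$. The Gaussian integration over the $\seccumul$-weighted anomalous sector, together with the normalization below, contributes the $z_0$- and $\seccumul$-dependent factor $\bigl\{\abs{1-\abs{\seccumul}z_0^2}^2 - \abs{\seccumul}^2(1-\abs{z_0}^2)^2\bigr\}^{-m}$, while evaluating $\mathcal G_1$ at the saddle --- where only the ``normal'' propagators survive, so that among the subleading cumulants only $\cumul{2}{2}$ contributes --- gives the term $\tfrac{m^2-m}{2}(1-\abs{z_0}^2)^2\cumul{2}{2}$. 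Running the same computation with $m=1$ yields the asymptotics of each $\CF_1(z_j)$; dividing $\CF_m(Z)$ by $n^{(m^2-m)/2}\prod_{j}\CF_1(z_j)$ cancels the non-universal $n$-growth and most of the $z_0$-dependent prefactors, the distribution- and $\zeta$-independent remainder being the constant $C_{m,z_0}$, which gives~\eqref{main result1}.

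The heart of the argument, and the main obstacle, is the saddle-point analysis. Unlike the pure complex ($\seccumul=0$) and pure real ($\seccumul=1$) cases, the crossover action couples the $\cumul{1}{1}$ term to the $\cumul{2}{0}$, $\cumul{0}{2}$ terms, so one must locate the saddle (a point or a submanifold) for arbitrary $\seccumul$, establish nondegeneracy of its transverse Hessian exactly on the region carved out by \ref{th1(i)}--\ref{th1(ii)} with estimates uniform as $\abs{\seccumul}\uparrow 1$, control the contribution of the complement of a neighbourhood of the saddle, and propagate the subleading $\cumul{2}{2}$-term through the Laplace expansion carefully enough to pin down $d(\seccumul,\cumul{2}{2})$ in closed form --- in particular the precise exponent $-m$ of the logarithm.
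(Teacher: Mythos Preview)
Your overall route---Berezin representation, cumulant expansion using only the first $2m$ moments, Hubbard--Stratonovich to reach a finite-dimensional integral, then steepest descent---is exactly the paper's approach; your derivation of the integral representation is what underlies Proposition~\ref{prop:IR}, and your identification of the positivity of $\abs{1-\abs{\seccumul}z_0^2}^2-\abs{\seccumul}^2(1-\abs{z_0}^2)^2$ as the nondegeneracy condition is correct and matches Corollary~\ref{cor:deriv of f_0} and~\eqref{d1 def}.

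There is, however, a genuine gap in your saddle-point step. The saddle is \emph{not} isolated: after Hubbard--Stratonovich the leading action $f_0$ depends on the $m\times m$ block $Q_1$ only through its singular values, so the critical set is the full orbit $\{U\Lambda_0 V^{*}:U,V\in U(m)\}$ with $\Lambda_0=\sqrt{1-\abs{z_0}^2}\,I_m$. The positivity you identified guarantees nondegeneracy \emph{transverse} to this orbit (and, via a separate Hadamard-type argument as in Lemma~\ref{lem:max of Re f_0}, uniqueness of the orbit as maximizer), but the flat $U(m)\times U(m)$ directions remain, and it is precisely the integral over these zero modes---not a Gaussian fluctuation integral---that produces the structure $\det(K_\Compl(\zeta_j,\zeta_k))/\abs{\Vanddet(\cMatPos)}^2$. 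Concretely, after integrating out the transverse modes one is left with $\int_{U(m)}\exp\{\tr \cMatPos W^{*}\cMatPos^{*}W\}\,d\mu(W)$ (see~\eqref{last asympt}), and this is evaluated by the Harish-Chandra/Itzykson--Zuber formula (Proposition~\ref{pr:H-C/I--Z formula}), which is the actual source of both the determinant of exponentials and the inverse Vandermonde. Your sentence ``the Gaussian fluctuation integral produces \ldots\ the ratio'' hides this mechanism entirely; without the SVD parametrization $Q_1=U\Lambda V^{*}$ and the HCIZ step, there is no way to extract the kernel.

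A secondary point: the uniqueness of the maximizing orbit is not a consequence of the Hessian positivity alone. In the paper this requires a global argument (Hadamard's inequality plus a case split on $\abs{\seccumul}<1$ versus $\abs{\seccumul}=1$, $z_0\notin\R$), and this is where conditions~\ref{th1(i)}--\ref{th1(ii)} enter a second time; your proposal conflates the local (Hessian) and global (uniqueness) roles of the hypotheses.
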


Note that \eqref{main result1} has an additional factor compared with \eqref{complex case}. This factor shows the dependence of the asymptotics of $\CF_m$ (here and below we omit $Z$ only if $Z = \diag\{z_1, \dotsc, z_m\}$) on $\seccumul$.

The paper is organized as follows. Section~\ref{sec:IR} discusses a suitable integral representation for $\CF_m$. In Section~\ref{sec:asympt analysis} we apply the steepest descent method to the suitable integral representation and find out the asymptotic behavior of $\CF_m$. In order to compute it, the Harish-Chandra/Itsykson--Zuber formula is used. 
For the reader convenience the latter section is divided into two parts. The first part deals with a simpler partial case whereas the second one treats a general case.

\subsection{Notations}
Through out the article lower-case letters denote scalars, bold lower-case letters denote vectors, upper-case letters denote matrices and bold upper-case letters denote sets of matrices. We use the same letter for a matrix, for its columns and for its entries. 
Table \ref{tab:notation} shows the exact correspondence.
\begin{table}
	\begin{center}
		\begin{tabular}{|c|c|c|c|}
			\hline
			Set of matrices & Matrix & Vector & Entry \\
			\hline
			$\cSetMatA$ & $\cMatA_{p,s}$ & & $\cSclA^{(p,s)}_{\alpha\beta}$ \\
			\hline
			& 
			& $\aVecB$ & $\aSclB_{j}$ \\
			\hline
			& 
			& $\aVecC$ & $\aSclC_{j}$ \\
			\hline
			& $\matA_{p,s}$ & & $\sclA^{(p,s)}_{\alpha\beta}$ \\
			\hline
			& $U$ & & $u_{kj}$ \\
			\hline
			& $V$ & & $v_{kj}$ \\
			\hline
		\end{tabular}
	\end{center}
	\caption{Notation correspondence}\label{tab:notation}
\end{table}
Besides, for any matrix $A$ we denote by $(A)_j$ its $j$-th column and by $(A)_{kj}$ its entry in the $k$-th row and in the $j$-th column.

The term ``Grassmann variable'' is a synonym for ``anti-commuting variable''. The variables of integration $\aSclB$, 
$\aSclC$ 
and $\aSclD$ 
are Grassmann variables, all the other variables of integration unspecified by an integration domain are either complex or real. We split all the generators of Grassmann algebra into two equal sets and consider the generators from the second set as ``conjugates'' of that from the first set. I.e., for Grassmann variable $\aSclG$ we use $\aConjScl{\aSclG}$ to denote its ``conjugate''. Furthermore, if $\aMatG = (\aSclG_{jk})$ means a matrix of Grassmann variables then $\aConjMat{\aMatG}$ is a matrix $(\aConjScl{\aSclG_{kj}})$. $d$-dimensional vectors are identified with $d \times 1$~matrices.

Integrals without limits denote either integration over Grassmann variables or integration over the whole space $\Compl^d$ or $\R^d$. Let also $d\cVecN^*d\cVecN$ ($\cVecN = (\cSclN_1, \dotsc, \cSclN_d)^T \in \Compl^d$) denote the measure $\prod\limits_{j = 1}^d d\cConjScl{\cSclN}_j d\cSclN_j$ on the space $\Compl^d$. Similarly, for vectors with anti-commuting entries $d\aConjMat{\aVecH}d\aVecH = \prod\limits_{j = 1}^d d\aConjScl{\aSclH_j} d\aSclH_j$. Note that the space of matrices is a linear space over $\Compl$. Thus the same notations are used for matrices as well.

$\langle \cdot, \cdot \rangle$ denotes a standard scalar product on $\Compl^d$. For matrices $\langle A, B \rangle = \tr \cConjMat{B}A$. For sets of matrices $\langle \bm{A}, \bm{B} \rangle = \sum_j \langle A_j, B_j \rangle$. The norm we use is defined by $\norm{\cdot} = \sqrt{\langle \cdot, \cdot \rangle}$.

$\binom{m}{p} \times \binom{m}{s}$ matrices appear in the statement of Proposition \ref{prop:IR}. It is natural to number rows and columns of such matrices by subsets of a $m$-element set. To this end, set
\begin{equation} \label{indexset def}
\indexset_{m, p'} = \{\alpha \in \Z^{p'} \mid 1 \leq \alpha_1 < \ldots < \alpha_{p'} \leq m\}.
\end{equation}
If $p' = 0$ we define $\indexset_{m, p'}$ as $\{\emptyindex\}$.

The cumulants $\cumul{p}{s}$ are defined as follows. Consider the function
\begin{equation*}
	\charfp(t_1, t_2) := \E \left\{
	e^{t_1x_{11} + t_2\conj{x}_{11}}
	\right\}.
\end{equation*}
Then
\begin{align}\label{cumul def}
	\cumul{p}{s} = \left.\frac{\partial^{p + s}}{\partial^p t_1 \partial^s t_2} \log\charfp(t_1, t_2)\right|_{t_1 = t_2 = 0}.
\end{align}
In particular, $\cumul{2}{2} = \E\{\abs{x_{11}}^4\} - \abs{\E\lbrace x_{11}^2 \rbrace}^2 - 2$.

Through out the article $U(m)$ denotes a group of unitary $m \times m$ matrices. $\mu$ denotes a corresponding Haar measure. 
In addition, $C$, $C_1$~denote various $n$-independent constants which can be different in different formulas.

\section{Integral representation for $\CF_m$}\label{sec:IR}
The following integral representation is true
\begin{proposition}\label{prop:IR}
	Let an ensemble $\ens$ be defined by~\eqref{Gin ens} and~\eqref{moments}. Then 
	the $m$\textsuperscript{th} correlation function of the characteristic polynomials $\CF_m$ defined by \eqref{F_m} can be represented in the following form
	\begin{equation}\label{IR result}
	\CF_m = \left(\frac{n}{\pi}\right)^{c_m} \int g(\cSetMatA) e^{(n - c_m)f(\cSetMatA)} d\cSetMatA,
	\end{equation}
	where $c_m = 2^{2m - 1}$, $\cSetMatA = (\cSetMatA_j)_{j = 0}^m$, $\cSetMatA_j = \{\cMatA_{p,s} \mid p + s = 2j,\, 0 \le p,s \le m\}$, $\cMatA_{p,s}$ is a complex $\binom{m}{p} \times \binom{m}{s}$ matrix, $d\cSetMatA = \prod\limits_{\substack{p + s \text{ is even} \\ 0 \le p,s \le m}} d\cConjMat{\cMatA_{p,s}}d\cMatA_{p,s}$ and
	\begin{align}
		\label{f def}
		f(\cSetMatA) &= - \langle \cSetMatA, \cSetMatA \rangle + \log h(\cSetMatA); \\
		\notag
		g(\cSetMatA) &= (h(\cSetMatA)^{c_m} + n^{-1/2}\mathtt{p}_a(\cSetMatA) ) \exp\left\{-c_m\langle \cSetMatA, \cSetMatA \rangle \right\}; \\
		\label{h def}
		h(\cSetMatA) &= \Pf F(\cSetMatA_1) + n^{-1/2} \tilde{h}(\cSetMatA_1, \cSetMatA_2) + n^{-1}\mathtt{p}_c(\cSetMatA_1, \cSetMatA_{>1}); \\
		\label{A def}
		F(\cSetMatA_1) &= \begin{pmatrix}
			\sqrt{\seccumul} B_{2,0} & 0						  & -Z								  & \cMatA_1					\\
			0						 & \sqrt{\seccumul} \cConjMat{B_{0,2}} & -\cConjMat{\cMatA_1}		  & -\cConjMat{Z}					\\
			Z						 & \conj{\cMatA}_1			  & \sqrt{\conj{\seccumul}} \cConjMat{B_{2,0}} & 0					\\
			-\transp{\cMatA_1}		 & \cConjMat{Z}					  & 0								  & \sqrt{\conj{\seccumul}} B_{0,2}
		\end{pmatrix};
	\end{align}
	$B_{2,0}$ and $B_{0,2}$ are skew-symmetric matrices such that
	\begin{align*}
		\notag
		(B_{2,0})_{\alpha_1\alpha_2} = - \cSclA_{\alpha\emptyindex}^{(2,0)}, \quad (B_{0,2})_{\alpha_1\alpha_2} = - \cSclA_{\emptyindex\alpha}^{(0,2)}, \quad \alpha \in \indexset_{m,2}
	\end{align*}
	and $\indexset_{m, 2}$ is defined in \eqref{indexset def}. Moreover,
	\begin{gather}
		\label{tilde h def}
		\tilde{h}(\cSetMatA_1, \cSetMatA_2) = - \int \sum\limits_{p + s = 4} \left(\tr \tilde{\matA}_{p,s} \cMatA_{p,s} + \tr \cConjMat{\cMatA_{p,s}} \matA_{p,s}\right) e^{-\frac{1}{2}\transp{\aVecD}F\aVecD} d\aConjMat{\aVecBt} d\aVecBt d\aConjMat{\aVecCt} d\aVecCt, \\
		\label{rho def}
		\aVecD = 
		\transp{\begin{pmatrix}
		\aConjMat{\aVecB} & \aConjMat{\aVecC} & \transp{\aVecB} & \transp{\aVecC}
		\end{pmatrix}}, \\
		\label{chi def}
		\begin{split}
		\tilde{\sclA}_{\beta\alpha}^{(p,s)} &= \sqrt{\cumul{p}{s}}(-1)^p\prod\limits_{r = s}^{1} \aSclC_{\beta_{r}}^{\phantom{+}} \prod\limits_{q = p}^{1} \aConjScl{\aSclB_{\alpha_{q}}}, \\
		\sclA_{\alpha\beta}^{(p,s)} &= \sqrt{\cumul{p}{s}} \prod\limits_{q = 1}^{p} \aSclB_{\alpha_{q}} \prod\limits_{r = 1}^{s} \aConjScl{\aSclC_{\beta_{r}}},
		\end{split}
	\end{gather}
	$\cumul{p}{s}$ are defined in \eqref{cumul def}, $\mathtt{p}_a(\cSetMatA)$ and $\mathtt{p}_c(\cSetMatA_1, \cSetMatA_{>1})$ are certain polynomials such that $\mathtt{p}_c(\cSetMatA_1, 0) = 0$, 
	and $\cSetMatA_{>1}$ contains all $\cSetMatA_j$ except $\cSetMatA_1$.
\end{proposition}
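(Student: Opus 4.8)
## Proof proposal for Proposition \ref{prop:IR}

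The plan is to start from the definition \eqref{F_m} and represent each factor $\det(\ens - z_j)(\ens - z_j)^*$ as a Gaussian integral over auxiliary vectors. Since $\det(\ens - z_j)(\ens - z_j)^* = \det(\ens - z_j)\,\overline{\det(\ens - z_j)}$ and a determinant of a matrix can be written as a Gaussian integral over a pair of Grassmann vectors, while its conjugate requires another pair, I would introduce for each $j$ a collection of anti-commuting vectors and write $\CF_m$ as a Grassmann integral whose integrand is $\exp\{\text{(quadratic form in the Grassmann vectors involving }\cMatGin, z_j)\}$. Then I would take the expectation over the matrix entries $x_{jk}$ inside the integral. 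Because the $x_{jk}$ are i.i.d.\ with the moments \eqref{moments}, the expectation factorizes over the $n^2$ entries, and $\E\{e^{t_1 x_{11} + t_2 \bar x_{11}}\} = \charfp(t_1,t_2) = \exp\{\sum_{p+s\ge 2} \cumul{p}{s} t_1^p t_2^s / (p!\,s!)\}$ by the definition \eqref{cumul def} of the cumulants. Substituting the appropriate nilpotent combinations of Grassmann bilinears for $t_1, t_2$ truncates this series: only finitely many cumulants $\cumul{p}{s}$ with $p, s \le m$ survive because higher powers of the Grassmann vectors vanish. This produces the matrices $\cMatA_{p,s}$ (after a Hubbard--Stratonovich transformation) and the combinations $\tilde{\sclA}^{(p,s)}, \sclA^{(p,s)}$ of \eqref{chi def}, with the $\sqrt{\seccumul}$ factors in \eqref{A def} coming precisely from the $\cumul{2}{0} = \seccumul$ term and $\sqrt{\cumul{p}{s}}$ from the higher ones.

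The next key step is the Hubbard--Stratonovich transformation: each Grassmann bilinear $\sum_k (\text{column})_k (\text{column})_k$ that appears after taking the expectation is decoupled by introducing a commuting (complex) matrix variable $\cMatA_{p,s}$ and writing $e^{\langle A, BB\rangle\text{-type term}}$ as a Gaussian integral $\int e^{-\langle \cMatA, \cMatA\rangle + (\text{linear in }\cMatA)} d\cConjMat{\cMatA}d\cMatA$ up to the normalization $(n/\pi)^{c_m}$. The dimensions $\binom{m}{p}\times\binom{m}{s}$ arise because the natural index sets are $\indexset_{m,p}\times\indexset_{m,s}$ from \eqref{indexset def} — the bilinears are labelled by which subset of the $m$ replica indices the Grassmann generators come from. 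After the transformation one can integrate out all the Grassmann vectors: the remaining Grassmann integral is Gaussian with covariance matrix built from $\cMatA_1 = \cMatA_{1,1}$, $Z$, and the skew-symmetric blocks $B_{2,0}, B_{0,2}$, so it equals a Pfaffian $\Pf F(\cSetMatA_1)$ of the $4\binom{m}{1}\times 4\binom{m}{1}$ matrix \eqref{A def}, plus lower-order-in-$n^{-1/2}$ corrections $\tilde h$ and $\mathtt{p}_c$ coming from the higher cumulant vertices (those carry extra Grassmann bilinears, hence extra $n^{-1/2}$ scaling from the normalization bookkeeping). Collecting the factor $\exp\{-\langle\cSetMatA,\cSetMatA\rangle\}$ from the Gaussian weights and the factor $h(\cSetMatA)^{\text{power}}$ from the Pfaffian, and separating the leading exponential $e^{(n-c_m)f}$ from the prefactor $g$, yields exactly \eqref{IR result}--\eqref{A def}.

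I expect the main obstacle to be \textbf{bookkeeping rather than conceptual}: keeping careful track of signs (Grassmann reorderings produce the $(-1)^p$ in \eqref{chi def} and the various minus signs and conjugation/transposition patterns in the blocks of $F$ in \eqref{A def}), of the exact powers of $n$ attached to each term (the split $n$ vs.\ $n - c_m$ in the exponent, and the $n^{-1/2}$, $n^{-1}$ corrections in \eqref{h def}), and of which cumulant powers genuinely survive the nilpotency truncation. A second delicate point is verifying that the Gaussian Grassmann integral collapses to the Pfaffian of precisely the matrix \eqref{A def} with the stated block structure — this requires choosing the ordering of the Grassmann vector $\aVecD$ in \eqref{rho def} consistently and checking that the quadratic form $\tfrac12 \transp{\aVecD} F \aVecD$ reproduces all the bilinear couplings. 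Everything else — the factorization of the entrywise expectation, the Hubbard--Stratonovich step, and the identification of $\mathtt{p}_a$, $\mathtt{p}_c$ as the explicit polynomials with $\mathtt{p}_c(\cSetMatA_1, 0) = 0$ — is routine once the combinatorial framework is set up.
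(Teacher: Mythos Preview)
Your proposal is correct and follows the standard SUSY derivation: Grassmann representation of the determinants, entrywise averaging via the cumulant expansion of $\log\charfp$, Hubbard--Stratonovich decoupling, and Grassmann integration to a Pfaffian. This is exactly the route taken in the cited reference~\cite{Af:20} for the case $\seccumul = 1$.

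The paper's own proof, however, does not redo any of this: it simply invokes \cite[Proposition~2.1]{Af:20}, where the identical representation was established for $\seccumul = 1$, and observes that the derivation there was carried out with a general $\seccumul$ up to the very last step, so that the only change in the outcome is the appearance of the factors $\sqrt{\seccumul}$ and $\sqrt{\conj{\seccumul}}$ in the blocks of $F(\cSetMatA_1)$ in~\eqref{A def}. Your approach is the self-contained version of what lies behind that citation; the paper's approach buys brevity at the cost of relying on the earlier article, while yours would make the present paper independent of~\cite{Af:20} but at the price of reproducing a substantial combinatorial calculation (precisely the sign and power-of-$n$ bookkeeping you flag as the main obstacle).
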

\begin{proof}
Proposintion~\ref{prop:IR} was proved for the case $\seccumul = 1$ in~\cite[Proposition~2.1]{Af:20}. The most part of the provided proof goes in the frames of a general case, and only in the very end $\seccumul = 1$ is substituted. Therefore it is easy to understand from \cite{Af:20} that the only distinction of the general case from the partial one is in the presence of $\seccumul$ in \eqref{A def}.
\end{proof}
\begin{remark}
	Let $\cMatA_1 = U\Lambda \cConjMat{V}$ be the singular value decomposition of the matrix $\cMatA_1$, i.e.\ $\Lambda = \diag\{\lambda_j\}_{j = 1}^m$, $\lambda_j \ge 0$, $U, V \in U(m)$. In order to perform asymptotic analysis let us change the variables $\cMatA_1 = U\Lambda \cConjMat{V}$, $B_{2,0} \to UB_{2,0}\transp{U}$, $B_{0,2} \to \conj{V}B_{0,2}\cConjMat{V}$ in \eqref{IR result}. Since the Jacobian is $\frac{2^m\pi^{m^2}}{\left(\prod_{j = 1}^{m - 1} j!\right)^2}\Vanddet^2(\Lambda^2) \prod\limits_{j = 1}^m \lambda_j$ (see e.g.\ \cite{Hu:63}) we obtain
	\begin{equation}\label{IR SVD}
	\begin{split}
	\CF_m = Cn^{c_m} &\int\limits_\idom \Vanddet^2(\Lambda^2) \prod\limits_{j = 1}^m \lambda_j \left[ g_0(\Lambda, \hat{\cSetMatA}) + \frac{1}{\sqrt{n}}g_r(U\Lambda \cConjMat{V}, \hat{\cSetMatA}) \right] \\
	&\times \exp\left\{(n - c_m)\left[ f_0(\Lambda, \hat{\cSetMatA}) + \frac{1}{\sqrt{n}}f_r(U\Lambda \cConjMat{V}, \hat{\cSetMatA}) \right]\right\} \\
	&\times d\mu(U) d\mu(V) d\Lambda d\hat{\cSetMatA},
	\end{split}
	\end{equation}
	where $\hat{\cSetMatA}$ contains all the matrices $\cMatA_{p,s}$ except $\cMatA_1$, $\idom = \{(\Lambda, U, V, \hat{\cSetMatA}) \mid \lambda_j \ge 0,\, j = 1, \dotsc, m,\linebreak[0] U, V \in U(m)\}$, $\mu$ is a Haar measure, $d\Lambda = \prod\limits_{j = 1}^m d\lambda_j$ and
	\begin{align}
		\label{f_0 def}
		f_0(\cSetMatA) &= -\langle \cSetMatA, \cSetMatA \rangle + \log h_0(\cSetMatA_1); \\
		\notag
		g_0(\cSetMatA) &= h_0(\cSetMatA_1)^{c_m} \exp\left\{-c_m\langle \cSetMatA, \cSetMatA \rangle \right\} = e^{c_m f_0(\cSetMatA)}; \\
		\label{h_0 def}
		h_0(\cSetMatA_1) &= \Pf \tilde{F}(\cSetMatA_1),\,\, \tilde{F}(\cSetMatA_1) :=
		\begin{pmatrix}
			\skB_{2,0} & 0							 & -z_0I_m			& \Lambda			\\
			0				& \cConjMat{\skB_{0,2}} & -\Lambda			& -\conj{z}_0I_m	\\
			z_0I_m			& \Lambda			& \cConjMat{\skB_{2,0}} & 0				\\
			-\Lambda		& \conj{z}_0I_m		& 0					& \skB_{0,2}
		\end{pmatrix}; \\
		\label{f_r def}
		f_r(\cSetMatA) &= \sqrt{n}(f(\cSetMatA) - f_0(\cSetMatA)); \\
		\notag 
		g_r(\cSetMatA) &= \sqrt{n}(g(\cSetMatA) - g_0(\cSetMatA)),
	\end{align}
	$\skB_{2,0} = \sqrt{\seccumul}B_{2,0}$ and $\skB_{0,2} = \sqrt{\conj{\seccumul}}B_{0,2}$. Notice that $f_0(U\Lambda \cConjMat{V}, \hat{\cSetMatA}) = f_0(\Lambda, \hat{\cSetMatA})$ and the same for $g_0$.
\end{remark}
\begin{remark}\label{rem:case m = 1}
	In the special case $m = 1$ the matrices $B_{2,0}$ and $B_{0,2}$ are zeros and we have
	\begin{equation*} 
		\CF_1(z) = \frac{n}{\pi} \int\exp \left\{n(-\abs{q}^2+\log(\abs{z}^2+\abs{q}^2))\right\} d\bar{q}dq.
	\end{equation*}
	Changing variables to polar coordinates and performing a simple Laplace integration, we obtain
	\begin{equation}\label{F_1 behavior}
	\begin{split}
		\CF_1(z) &= 2n \int\limits_0^{+\infty} r\exp \left\{n(-r^2+\log(\abs{z}^2+r^2))\right\} dr \\
		&= \sqrt{2\pi n}\, e^{n(\abs{z}^2 - 1)}(1 + o(1)).
	\end{split}
	\end{equation}
\end{remark}

\section{Asymptotic analysis}\label{sec:asympt analysis}
The goal of the section is to investigate the asymptotic behavior of the integral representation~\eqref{IR SVD}. To this end, the steepest descent method is applied. As usual, the hardest step is to choose stationary points of $f(\cSetMatA)$ and a $N$-dimensional (real) manifold $M_* \subset \Compl^{N}$ such that for any chosen stationary point $\cSetMatA_* \in M_*$
\begin{equation*}
	\Re f(\cSetMatA) < \Re f(\cSetMatA_*), \quad \forall \cSetMatA \in M_*,\, \text{$\cSetMatA$ is not chosen}.
\end{equation*}
Note that $N$ is equal to the number of real variables of the integration, i.e.\ in our case $N = 2^{2m}$.

The present proof proceeds by a standard scheme for the case when function $f(\cSetMatA)$ has the form
\begin{equation*}
	f(\cSetMatA) = f_0(\cSetMatA) + n^{-1/2}f_r(\cSetMatA),
\end{equation*}
where $f_0(\cSetMatA)$ does not depend on $n$, whereas $f_r(\cSetMatA)$ may depend on $n$. We choose stationary points of $f_0(\cSetMatA)$ of the form $\cMatA_1 = U\Lambda_0\cConjMat{V}$, $\hat{\cSetMatA} = 0$, where $\Lambda_0 = 
\lambda_0I$, $\lambda_0 = \sqrt{1 - \abs{z_0}^2}$ 
and $U$, $V$ vary in $U(m)$. The manifold $M_*$ is $\R^N$. Then the steepest descent method is applied to the integral over $\Lambda$ and $\hat{\cSetMatA}$. In the process $U$ and $V$ are considered as parameters and all the estimates are uniform in $U$ and $V$. As soon as the domain of integration is restricted by a small neighborhood we recall about the integration over $U$ and $V$. After several changes of the variables the integral is reduced to the form \eqref{main result1}.

We start with an analysis of the function $f_0$. 
\begin{lemma}\label{lem:max of Re f_0}
	Let the function $f_0 \colon \R^{2^{2m}} \to \Compl$ be defined by \eqref{f_0 def}. Then the function $\Re f_0(\Lambda, \hat{\cSetMatA})$ attains its global maximum value only at the point
	\begin{equation}\label{point of max}
		\lambda_1 = \dotsb = \lambda_m = \lambda_0, \quad \hat{\cSetMatA} = 0,
	\end{equation}
	where $\lambda_0 = \sqrt{1 - \abs{z_0}^2}$. 
\end{lemma}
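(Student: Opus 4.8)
\noindent\emph{Plan.} The idea is to reduce the maximization to the three blocks on which $h_0$ genuinely depends, turn $\Re\log h_0=\log\abs{h_0}$ into an explicit sum over the singular values of $\cMatA_{1,1}$ plus a correction term, and then control the correction. First I would observe that $h_0(\cSetMatA_1)=\Pf\tilde{F}(\cSetMatA_1)$ depends on $\cSetMatA$ only through the singular values $\Lambda=\diag\{\lambda_j\}_{j=1}^m$ of $\cMatA_{1,1}$ and through the skew-symmetric matrices $B_{2,0},B_{0,2}$ (equivalently $\cMatA_{2,0},\cMatA_{0,2}$); every remaining block of $\cSetMatA$ enters $f_0$ only through the term $-\langle\cSetMatA,\cSetMatA\rangle\le0$, and, by the Remark, $f_0$ is invariant under the substitution $\cMatA_1\mapsto U\Lambda\cConjMat{V}$, $\skB_{2,0}\mapsto U\skB_{2,0}\transp{U}$, $\skB_{0,2}\mapsto\conj{V}\skB_{0,2}\cConjMat{V}$. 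Hence at a global maximum of $\Re f_0$ all blocks other than $\cMatA_{1,1},\cMatA_{2,0},\cMatA_{0,2}$ must vanish, and it suffices to maximize $\Re f_0=-\sum_j\lambda_j^2-b+\log\bigl|\Pf\tilde{F}\bigr|$ over $\lambda_j\ge0$ and skew-symmetric $B_{2,0},B_{0,2}$, where $b:=\langle\cMatA_{2,0},\cMatA_{2,0}\rangle+\langle\cMatA_{0,2},\cMatA_{0,2}\rangle\ge0$, so that $\norm{\skB_{2,0}}^2+\norm{\skB_{0,2}}^2=2\abs{\seccumul}\,b$.

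Next I would rewrite $\tilde{F}$ as the $2\times2$ block matrix $\bigl(\begin{smallmatrix}\mathsf{B}&\mathsf{A}\\-\transp{\mathsf{A}}&\mathsf{B}^{*}\end{smallmatrix}\bigr)$ with $2m\times2m$ blocks, where $\mathsf{B}=\skB_{2,0}\oplus\skB_{0,2}^{*}$ is block diagonal and $\mathsf{A}=\bigl(\begin{smallmatrix}-z_0I_m&\Lambda\\-\Lambda&-\conj{z}_0I_m\end{smallmatrix}\bigr)$ is normal with $\mathsf{A}^{*}\mathsf{A}=\mathsf{A}\mathsf{A}^{*}=(\abs{z_0}^2I_m+\Lambda^2)^{\oplus2}=:\mathsf{D}$; thus $\mathsf{A}$ is invertible except on the locus $\{z_0=0,\ \min_j\lambda_j=0\}$, where $\log\abs{h_0}\to-\infty$ and which is therefore irrelevant for the maximum. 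A Schur‑complement computation then gives $\bigl|\Pf\tilde{F}\bigr|^2=\bigl|\det\tilde{F}\bigr|=\Bigl(\prod_j(\abs{z_0}^2+\lambda_j^2)\Bigr)^{2}\bigl|\det(I_{2m}+\mathsf{A}^{-1}\mathsf{B}\,(\transp{\mathsf{A}})^{-1}\mathsf{B}^{*})\bigr|$, so that, with $M:=\mathsf{A}^{-1}\mathsf{B}\,(\transp{\mathsf{A}})^{-1}\mathsf{B}^{*}$,
\[
\Re f_0=\sum_{j=1}^m g(\lambda_j)-b+\tfrac12\log\bigl|\det(I_{2m}+M)\bigr|,\qquad g(\lambda):=-\lambda^{2}+\log(\abs{z_0}^{2}+\lambda^{2}).
\]
The scalar part is then elementary: $g$ on $[0,\infty)$ has critical points $0$ and $\lambda_0=\sqrt{1-\abs{z_0}^2}$, and since $\abs{z_0}<1$ and $\log u<u-1$ for $u\in(0,1)$ one has $g(\lambda_0)=\abs{z_0}^2-1>\log\abs{z_0}^2=g(0)$ while $g\to-\infty$ at $+\infty$; hence $\lambda_0$ is the unique maximizer of $g$ and $\sum_j g(\lambda_j)\le m(\abs{z_0}^2-1)$ with equality only when $\lambda_1=\dots=\lambda_m=\lambda_0$.

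The hard part will be the correction term: one wants $\tfrac12\log\abs{\det(I_{2m}+M)}\le b$ with equality iff $B_{2,0}=B_{0,2}=0$, but this inequality does not hold pointwise, so it has to be combined with the deficit of the scalar part. From $\abs{\det(I+M)}\le\prod_j(1+s_j(M))$ and $\norm{M}_1\le\norm{\mathsf{A}^{-1}}_{\mathrm{op}}^{2}\bigl(\norm{\skB_{2,0}}^2+\norm{\skB_{0,2}}^2\bigr)$ with $\norm{\mathsf{A}^{-1}}_{\mathrm{op}}^{-2}=\abs{z_0}^2+\min_j\lambda_j^2$ one gets $\Re f_0\le\sum_j g(\lambda_j)+b\bigl(\tfrac{\abs{\seccumul}}{\abs{z_0}^2+\min_j\lambda_j^2}-1\bigr)$; in the region $\abs{z_0}^2+\min_j\lambda_j^2>\abs{\seccumul}$ the bracket is strictly negative for $b>0$ and the scalar part closes the argument. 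In the complementary region $\min_j\lambda_j^2\le\abs{\seccumul}-\abs{z_0}^2$ — which, since $\abs{\seccumul}\le1$, forces $\min_j\lambda_j<\lambda_0$ and hence a \emph{fixed} positive deficit in $\sum_j g(\lambda_j)$ — the naive operator‑norm estimate is too weak (it overestimates $\det(I+M)$ when some $\lambda_j$ is small) and must be replaced by a sharper one: $\det(I+M)$ is a polynomial of degree $\le4m$ in $(\skB_{2,0},\skB_{0,2})$, so $\Re f_0\to-\infty$ as $b\to\infty$ and the supremum is attained on a compact set, and there one uses $\abs{\det\tilde{F}}^2\le\det(\mathsf{B}^{*}\mathsf{B}+\mathsf{D})\det(\mathsf{B}\mathsf{B}^{*}+\mathsf{D})$ (Fischer's inequality), which for block‑diagonal $\mathsf{B}$ splits into $m\times m$ determinants like $\det(\abs{z_0}^2I_m+\Lambda^2+\skB_{2,0}\skB_{2,0}^{*})$, and compares the resulting bound against the deficit. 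Finally, at the candidate point $\Lambda=\lambda_0I$ the matrix $\mathsf{A}$ is unitary, so $\tfrac12\log\abs{\det(I+M)}\le\tfrac12\norm{M}_1\le\tfrac12(\norm{\skB_{2,0}}^2+\norm{\skB_{0,2}}^2)\le b$ with equality forcing $M=0$ and then $\skB_{2,0}=\skB_{0,2}=0$; when $\abs{\seccumul}=1$ this is precisely where the hypothesis $z_0\notin\R$ is needed, ruling out flat directions of $\Re f_0$ at the maximum (the off‑diagonal block of $\tilde{F}^{*}\tilde{F}$ carries a factor $\conj{z}_0-z_0$). I expect the bookkeeping in this last step, and in particular the estimate in the region where $\min_j\lambda_j$ is well below $\lambda_0$, to be where most of the work lies.
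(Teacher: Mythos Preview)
Your plan is sound in outline but takes a considerably longer route than necessary, and the step you flag as ``where most of the work lies'' is an artifact of the decomposition you chose rather than a genuine difficulty. The paper bounds $\Re f_0$ in three moves: apply Hadamard's inequality directly to the $4m\times4m$ matrix $\tilde{F}$, whose squared column norms are all of the form $\abs{z_0}^2+\lambda_j^2+\abs{\seccumul}\cdot(\text{a row- or column-sum of }\abs{B_{2,0}}^2\text{ or }\abs{B_{0,2}}^2)$; then apply $\log x\le x-1$ to each logarithm; then use $\abs{\seccumul}\le1$. On the right the $\lambda_j^2$ cancel against $-\tr\Lambda^2$ and the remaining $B$-terms are dominated by the $B$-part of $\langle\cSetMatA_1,\cSetMatA_1\rangle$, yielding $\Re f_0\le m(\abs{z_0}^2-1)$ with no region-splitting and no Schur complement. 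In fact your own Fischer bound $\abs{\det\tilde{F}}^2\le\det(\mathsf{B}^*\mathsf{B}+\mathsf{D})\det(\mathsf{B}\mathsf{B}^*+\mathsf{D})$, once you apply Hadamard to the four $m\times m$ diagonal blocks it produces (since $\mathsf B$ is block diagonal), lands on exactly these same column-norm factors --- so your ``sharper estimate'' recovers the paper's inequality anyway, and all of the preceding machinery (isolating $\sum_j g(\lambda_j)$, bounding $\norm{M}_1$ via $\norm{\mathsf A^{-1}}_{\mathrm{op}}$, splitting on $\min_j\lambda_j$) becomes a detour.

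The strategic point you are missing is that one should \emph{not} separate the scalar part $g(\lambda)$ from the $B$-correction: the single inequality $\log x\le x-1$ applied with $x$ equal to the full column norm $\abs{z_0}^2+\lambda_j^2+\abs{\seccumul}c_j$ handles both contributions at once. For uniqueness the paper is also more direct: equality in $\log x\le x-1$ forces each such argument to equal $1$, and in the boundary case $\abs{\seccumul}=1$ equality in Hadamard forces the columns of $\tilde{F}$ to be pairwise orthogonal, which for $z_0\ne\bar z_0$ immediately kills every entry of $B_{2,0}$ and $B_{0,2}$ (the relevant inner products carry a factor $z_0-\bar z_0$, as you correctly anticipated). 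Your uniqueness sketch via $\norm{M}_1$ at $\Lambda=\lambda_0 I$ is correct but again less direct.
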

\begin{proof}
	From 
	\eqref{f_0 def} and \eqref{h_0 def} we get
	\begin{equation}\label{max of Re: ineq 1}
	\begin{split}
	\Re f_0(\Lambda, \hat{\cSetMatA}) &= - \sum\limits_{j \ne 1} \langle \cSetMatA_j, \cSetMatA_j \rangle - \langle \cSetMatA_1, \cSetMatA_1 \rangle + \frac{1}{2}\log \abs{\det \tilde{F}} \\
	&\le - \langle \cSetMatA_1, \cSetMatA_1 \rangle + \frac{1}{2}\log \abs{\det \tilde{F}}.
	\end{split}
	\end{equation}
	Hadamard's inequality yields
	\begin{equation}\label{max of Re: Hadamard}
	\begin{split}
	&
	\frac{1}{2}\log \abs{\det \tilde{F}} \le 
	\frac{1}{2}\log \biggl\lbrace \prod\limits_{j = 1}^{m} \Bigl(\abs{z_0}^2 + \lambda_j^2 + \abs{\seccumul} \sum\limits_{k = 1}^m \abs{\cSclA_{(j,k)\emptyindex}^{(2,0)}}^2 
	\Bigr)^{\frac{1}{2}} \\
	&\times \Bigl(\abs{\conj{z}_0}^2 + \lambda_j^2 + \abs{\seccumul} \sum\limits_{k = 1}^m \abs{\conj{\cSclA_{\emptyindex(j,k)}^{(0,2)}}}^2 \Bigr)^{\frac{1}{2}}\Bigl(\abs{z_0}^2 + \lambda_j^2 + \abs{\conj{\seccumul}} \sum\limits_{k = 1}^m \abs{\conj{\cSclA_{(j,k)\emptyindex}^{(2,0)}}}^2 \Bigr)^{\frac{1}{2}} \\
	&\phantom{\Bigl(\abs{z_0}^2 + \lambda_j^2 + \abs{\seccumul} \sum\limits_{k = 1}^m \abs{\cSclA_{(j,k)\emptyindex}^{(2,0)}}^2 \Bigr)^{\frac{1}{2}}}\times \Bigl(\abs{\conj{z}_0}^2 + \lambda_j^2 + \abs{\conj{\seccumul}} \sum\limits_{k = 1}^m \abs{\cSclA_{\emptyindex(j,k)}^{(0,2)}}^2 \Bigr)^{\frac{1}{2}} \biggr\rbrace.
	\end{split}
	\end{equation}
	where $\cSclA_{(j,k)\emptyindex}^{(2,0)} = - \cSclA_{(k,j)\emptyindex}^{(2,0)}$, $\cSclA_{(j,k)\emptyindex}^{(0,2)} = - \cSclA_{(k,j)\emptyindex}^{(0,2)}$ for $j > k$ and $\cSclA_{(j,j)\emptyindex}^{(2,0)} = \cSclA_{(j,j)\emptyindex}^{(0,2)} = 0$. Simplifying the r.h.s. of \eqref{max of Re: Hadamard} and taking into account \eqref{max of Re: ineq 1} we obtain
	\begin{equation}\label{max of Re: ineq 2}
	\begin{split}
	\Re f_0(\Lambda, \hat{\cSetMatA}) \le - \langle \cSetMatA_1, \cSetMatA_1 \rangle + \frac{1}{2}\sum\limits_{j = 1}^{m} \log \biggl\lbrace \Bigl(\abs{z_0}^2 + \lambda_j^2 + \abs{\seccumul} \sum\limits_{k = 1}^m \abs{\cSclA_{(j,k)\emptyindex}^{(2,0)}}^2 \Bigr) \\
	\times \Bigl(\abs{z_0}^2 + \lambda_j^2 + \abs{\seccumul} \sum\limits_{k = 1}^m \abs{\cSclA_{\emptyindex(j,k)}^{(0,2)}}^2 \Bigr) \biggr\rbrace,
	\end{split}
	\end{equation}
	The inequality $\log x \le x - 1$ and \eqref{max of Re: ineq 2} imply
	\begin{equation}\label{max of Re: ineq 3}
	\begin{split}
	\Re f_0(\Lambda, \hat{\cSetMatA}) \le{}& {-\langle \cSetMatA_1, \cSetMatA_1 \rangle} + \frac{1}{2}\sum\limits_{j = 1}^{m} \biggl\lbrace \Bigl(\abs{z_0}^2 + \lambda_j^2 + \abs{\seccumul} \sum\limits_{k = 1}^m \abs{\cSclA_{(j,k)\emptyindex}^{(2,0)}}^2 \Bigr) \\
	&+ \Bigl(\abs{z_0}^2 + \lambda_j^2 + \abs{\seccumul} \sum\limits_{k = 1}^m \abs{\cSclA_{\emptyindex(j,k)}^{(0,2)}}^2 \Bigr) - 2 \biggr\rbrace \\
	={}& {-\langle \cSetMatA_1, \cSetMatA_1 \rangle} + m\abs{z_0}^2 - m + \sum\limits_{j = 1}^{m} \lambda_j^2 + \abs{\seccumul} \sum\limits_{\alpha \in \indexset_{m,2}} \abs{\cSclA_{\alpha\emptyindex}^{(2,0)}}^2 \\
	&+ \abs{\seccumul} \sum\limits_{\alpha \in \indexset_{m,2}} \abs{\cSclA_{\emptyindex\alpha}^{(0,2)}}^2.
	\end{split}
	\end{equation}
	Finally, since $\abs{\seccumul} = \abs{\E\lbrace x_{11}^2 \rbrace} \le \E\lbrace \abs{x_{11}}^2 \rbrace = 1$, we have
	\begin{multline}\label{max of Re: main ineq}
	\Re f_0(\Lambda, \hat{\cSetMatA}) \le {-\langle \cSetMatA_1, \cSetMatA_1 \rangle} + \sum\limits_{j = 1}^{m} \lambda_j^2 + \sum\limits_{\alpha \in \indexset_{m,2}} \Bigl\lbrace
	\abs{\cSclA_{\alpha\emptyindex}^{(2,0)}}^2 + \abs{\cSclA_{\emptyindex\alpha}^{(0,2)}}^2 \Bigr\rbrace \\
	+ m\abs{z_0}^2 - m = {-\langle \cSetMatA_1, \cSetMatA_1 \rangle} + \langle \cSetMatA_1, \cSetMatA_1 \rangle + m\abs{z_0}^2 - m = m(\abs{z_0}^2 - 1).
	\end{multline}
	Therefore, the function $\Re f_0(\Lambda, \hat{\cSetMatA})$ attains its global maximum value at the point~\eqref{point of max}. It remains to show that there is no other point for which $\Re f_0(\Lambda, \hat{\cSetMatA}) = m(\abs{z_0}^2 - 1)$. Indeed, equality in \eqref{max of Re: ineq 1} is attained if and only if $\cSetMatA_{>1} = 0$. Moreover, 
	the r.h.s. of \eqref{max of Re: main ineq} and \eqref{max of Re: ineq 3} are equal 
	if and only if $\abs{\seccumul} = 1$ or $\cSclA_{\alpha\emptyindex}^{(2,0)} = \cSclA_{\emptyindex\alpha}^{(0,2)} = 0$. Let us consider the following two cases.
	\begin{enumerate}
	\item $\abs{\seccumul} < 1 \quad \conseq \quad \cSclA_{\alpha\emptyindex}^{(2,0)} = \cSclA_{\emptyindex\alpha}^{(0,2)} = 0$ for all $\alpha \in \indexset_{m,2}$.
	
	Since the equality $\log x = x - 1$ holds if and only if $x = 1$, then we obtain from the equality of the r.h.s.\ of \eqref{max of Re: ineq 2} and \eqref{max of Re: ineq 3} that
	\begin{equation*}
	\abs{z_0}^2 + \lambda_j^2 + \abs{\seccumul} \sum\limits_{k = 1}^m \abs{\cSclA_{(j,k)\emptyindex}^{(2,0)}}^2 = 1.
	\end{equation*}
	Thus for any $j$
	\begin{equation*}
	\lambda_j = \sqrt{1 - \abs{z_0}^2}.
	\end{equation*}
	\item $\abs{\seccumul} = 1$ and $z_0 \notin \R$.
	
	Equality in Hadamard's inequality is attained if and only if columns of a matrix are orthogonal vectors. Hence, if equality is attained in \eqref{max of Re: Hadamard}, then the columns of the matrix~$\tilde{F}$ are orthogonal. In particular, the orthogonality of the first and the $2m + 2$\textsuperscript{nd} yields
	\begin{equation*}\label{1 perp 2m+2}
	-(\skB_{2,0})_{21}\conj{z}_0 + z_0(\skB_{2,0})_{21} = 0.
	\end{equation*}
	Since $z_0 \ne \conj{z}_0$, the last identity implies
	\begin{equation*}
	q_{(1,2)\emptyindex}^{(2,0)} = \frac{1}{\sqrt{\seccumul}}(\skB_{2,0})_{21} = 0.
	\end{equation*}
	Using a similar argument, we get that all $q_{\alpha\emptyindex}^{(2,0)}$ and $q_{\emptyindex\alpha}^{(0,2)}$ are zeros. Next, similarly to the first case we obtain $\lambda_1 = \dotsb = \lambda_m = \sqrt{1 - \abs{z_0}^2}$.
	\end{enumerate}
	Totally, the assertion of the lemma is proven.
\end{proof}

To simplify the reading, the remaining steps are first explained in the case when the cumulants $\cumul{p}{s}$, $p + s > 2$ are zeros.

\subsection{Case of zero high cumulants}
Now we proceed to the integral estimates. In a standard way the integration domain in \eqref{IR SVD} can be restricted as follows
\begin{align*}
	\CF_m = Cn^{c_m} \int\limits_{\Sigma_r} \Vanddet^2(\Lambda^2) \prod\limits_{j = 1}^m \lambda_j \times g(U\Lambda \cConjMat{V}, \hat{\cSetMatA})e^{(n - c_m)f(U\Lambda \cConjMat{V}, \hat{\cSetMatA})} d\mu(U) d\mu(V) d\Lambda d\hat{\cSetMatA} \\
	+ O(e^{-nr/2}),
\end{align*}
where
\begin{equation*}
	\Sigma_r = \left\{(\Lambda, U, V, \hat{\cSetMatA}) \mid \norm{\Lambda} + \normsized{\hat{\cSetMatA}} \le r\right\} 
	.
\end{equation*}
The next step is to restrict the integration domain by 
\begin{equation}\label{stpoinnbh def}
\stpointsnbh = \left\{(\Lambda, U, V, \hat{\cSetMatA}) \mid \norm{\Lambda - \Lambda_0} + \normsized{\hat{\cSetMatA}} \le \frac{\log n}{\sqrt{n}}\right\}.
\end{equation}
To this end we need the estimate of $\Re f$ given by the following lemmas.

\begin{lemma}\label{lem:f(UL V^*) expansion}
	Let $\tilde{\Lambda}$ and $\hat{\tilde{\cSetMatA}}$ satisfy the condition $\normsized{\tilde{\Lambda}} + \normsized{\hat{\tilde{\cSetMatA}}} \le \log n$. Then uniformly in $U$ and $V$
	\begin{multline} \label{f expansion}
	f(U(\Lambda_0 + n^{-1/2}\tilde{\Lambda})\cConjMat{V}, n^{-1/2}\hat{\tilde{\cSetMatA}}) = {- m\lambda_0^2} + n^{-1/2} \tr (\conj{z}_0\cMatPos + z_0\cConjMat{\cMatPos}) \\
	\shoveright{- \frac{1}{2n} \tr(2\lambda_0\tilde{\Lambda} + \conj{z}_0\cMatPos_U + z_0\cConjMat{\cMatPos_V})^2 + \frac{1}{n} \tr\cMatPos_U\cConjMat{\cMatPos_V}} \\
	\shoveright{- \frac{1}{2n}\tr\Bigl[(1 - \abs{\seccumul}\conj{z}_0^2)\cConjMat{\tilde{B}_{2,0}}\tilde{B}_{2,0} + (1 - \abs{\seccumul}z_0^2)\cConjMat{\tilde{B}_{0,2}}\tilde{B}_{0,2}} \\
	\shoveright{- \abs{\seccumul}\lambda_0^2\tilde{B}_{0,2}\tilde{B}_{2,0} - \abs{\seccumul}\lambda_0^2\cConjMat{\tilde{B}_{2,0}}\cConjMat{\tilde{B}_{0,2}}\Bigr] - \frac{1}{n}\norm{\tilde{\cSetMatA}_{>1}}^2} \\
	+ O\big(n^{-3/2}\log^3 n\big)\lefteqn{,}
	\end{multline}
	where $\cMatPos_W = \cConjMat{W}\cMatPos W$.
\end{lemma}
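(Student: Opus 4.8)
The statement is a local Taylor expansion of $f=f_0+n^{-1/2}f_r$ around the stationary manifold $\cMatA_1=U\Lambda_0\cConjMat{V}$, $\hat{\cSetMatA}=0$, written in the rescaled variables $\Lambda=\Lambda_0+n^{-1/2}\tilde\Lambda$, $\hat{\cSetMatA}=n^{-1/2}\hat{\tilde{\cSetMatA}}$, and $\cMatA_1=U(\Lambda_0+n^{-1/2}\tilde\Lambda)\cConjMat{V}$ with $B_{2,0}\to U B_{2,0}\transp U$, $B_{0,2}\to\conj V B_{0,2}\cConjMat V$ as in the Remark. The plan is to split $f=-\langle\cSetMatA,\cSetMatA\rangle+\log h(\cSetMatA)$ and handle the quadratic (trivial) part and the $\log h$ part separately, keeping all terms up to order $n^{-1}$ and controlling the remainder by $O(n^{-3/2}\log^3 n)$ using $\normsized{\tilde\Lambda}+\normsized{\hat{\tilde{\cSetMatA}}}\le\log n$.

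\textbf{Step 1 (the quadratic part).} Since $\langle\cSetMatA,\cSetMatA\rangle=\langle\cMatA_1,\cMatA_1\rangle+\norm{\hat{\cSetMatA}}^2$ and $\langle\cMatA_1,\cMatA_1\rangle=\tr\Lambda^2+\tfrac12(\norm{B_{2,0}}^2+\norm{B_{0,2}}^2)$ is unitarily invariant, substituting $\Lambda=\Lambda_0+n^{-1/2}\tilde\Lambda$ gives $-\langle\cSetMatA,\cSetMatA\rangle=-m\lambda_0^2-2n^{-1/2}\lambda_0\tr\tilde\Lambda-n^{-1}\tr\tilde\Lambda^2-\tfrac{1}{2n}(\norm{\tilde B_{2,0}}^2+\norm{\tilde B_{0,2}}^2)-n^{-1}\norm{\tilde{\cSetMatA}_{>1}}^2$ exactly (no remainder). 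This already produces the $-n^{-1}\norm{\tilde{\cSetMatA}_{>1}}^2$ term and contributes to the $\tilde\Lambda$-dependent and $\tilde B$-dependent pieces.

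\textbf{Step 2 (expanding $\log h$).} By \eqref{h def}, $h=\Pf F(\cSetMatA_1)+n^{-1/2}\tilde h+n^{-1}\mathtt{p}_c$; since we are in the case of zero high cumulants, the $\tilde h$ and $\mathtt p_c$ corrections vanish (they are built from $\sqrt{\cumul{p}{s}}$ with $p+s>2$ — this should be checked against \eqref{tilde h def}, \eqref{chi def}), so $\log h=\tfrac12\log\det F(\cSetMatA_1)$ up to the claimed error. The key computation is to expand $\tfrac12\log\det F$ where $F$ has the block form \eqref{A def} with $Z=z_0 I_m+n^{-1/2}\cMatPos$ (after conjugating the $U(m)$ factors into $\cMatPos$, producing $\cMatPos_U=\cConjMat U\cMatPos U$ and $\cConjMat{\cMatPos_V}$), the singular-value block equal to $\Lambda_0+n^{-1/2}\tilde\Lambda=\lambda_0 I+n^{-1/2}\tilde\Lambda$, and the skew blocks of order $n^{-1/2}$. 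Write $F=F_*+n^{-1/2}G+\dots$ where $F_*$ is the value at the stationary point, and use $\log\det F=\log\det F_*+\tr(F_*^{-1}(F-F_*))-\tfrac12\tr((F_*^{-1}(F-F_*))^2)+O(n^{-3/2}\log^3 n)$. The linear term reproduces the $n^{-1/2}\tr(\conj z_0\cMatPos+z_0\cConjMat{\cMatPos})$ contribution together with a piece that combines with Step 1; the quadratic term, after computing $F_*^{-1}$ explicitly (the $4m\times 4m$ block structure with $\lambda_0^2+\abs{z_0}^2=1$ makes $F_*^{-1}$ tractable — this is where $\lambda_0^2=1-\abs{z_0}^2$ is used crucially), gives the cross terms $\tfrac1n\tr\cMatPos_U\cConjMat{\cMatPos_V}$, the perfect square $-\tfrac{1}{2n}\tr(2\lambda_0\tilde\Lambda+\conj z_0\cMatPos_U+z_0\cConjMat{\cMatPos_V})^2$, and the $\tilde B$-block form $-\tfrac{1}{2n}\tr[(1-\abs{\seccumul}\conj z_0^2)\cConjMat{\tilde B_{2,0}}\tilde B_{2,0}+(1-\abs{\seccumul}z_0^2)\cConjMat{\tilde B_{0,2}}\tilde B_{0,2}-\abs{\seccumul}\lambda_0^2\tilde B_{0,2}\tilde B_{2,0}-\abs{\seccumul}\lambda_0^2\cConjMat{\tilde B_{2,0}}\cConjMat{\tilde B_{0,2}}]$ — here the $\sqrt{\seccumul}$ factors multiplying the $B$-blocks in \eqref{A def} are what generate the $\seccumul$-dependence, and the off-diagonal $B_{2,0}B_{0,2}$ couplings come from the way the skew blocks sit in opposite corners of $F$.

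\textbf{Step 3 (remainder control).} The error terms come from (a) the cubic and higher terms in the expansion of $\log\det F$, each of which carries at least $n^{-3/2}$ and at most $\log^3 n$ from the norm bound $\normsized{\tilde\Lambda}+\normsized{\hat{\tilde{\cSetMatA}}}\le\log n$ — one needs $\det F_*\ne 0$ uniformly, which follows from the maximum analysis in Lemma \ref{lem:max of Re f_0}; and (b) the subleading $\tilde h$ and $\mathtt p_c$ pieces of $h$, which in the zero-high-cumulant case are absent, and in any case would contribute only at order $n^{-1/2}$ times a polynomial in vanishing quantities, hence $O(n^{-3/2}\log^3 n)$. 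Uniformity in $U,V$ is automatic because after the conjugation only $\cMatPos_U,\cMatPos_V$ depend on them and these have the same norm as $\cMatPos$, so every bound is $U,V$-free.

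\textbf{Main obstacle.} The genuinely laborious step is Step 2: computing $F_*^{-1}$ for the $4m\times 4m$ block matrix \eqref{h_0 def} and then extracting, from $-\tfrac12\tr((F_*^{-1}\Delta F)^2)$, precisely the stated combination of quadratic forms in $\tilde\Lambda,\cMatPos_U,\cMatPos_V,\tilde B_{2,0},\tilde B_{0,2}$. One must be careful that $\cMatPos$ is a diagonal matrix (it is $\cMatPos=\diag\{\zeta_j\}$) while $\cMatPos_U,\cMatPos_V$ are not, that $\tilde\Lambda$ enters both through the quadratic part (Step 1) and through $\log\det F$, so the $\tilde\Lambda^2$ contributions must be reassembled correctly into the perfect square, and that the skew blocks only couple to themselves and to $z_0$ (not to $\cMatPos$ or $\tilde\Lambda$ at the order kept), which is why the $B$-terms decouple from the rest in \eqref{f expansion}. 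Verifying that the coefficient of $\cConjMat{\tilde B_{2,0}}\tilde B_{2,0}$ is exactly $1-\abs{\seccumul}\conj z_0^2$ (and not, say, $1-\seccumul\conj z_0^2$) requires tracking the $\sqrt{\seccumul}$ versus $\sqrt{\conj\seccumul}$ placement in \eqref{A def} together with the Hermitian conjugates appearing in the $\log\det$ expansion.
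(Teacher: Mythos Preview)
Your proposal is correct and follows essentially the same approach as the paper: split off the trivial quadratic part $-\langle\cSetMatA,\cSetMatA\rangle$, then expand $\tfrac12\log\det F$ as $\tr\log(I+n^{-1/2}F_0^{-1}F_1)$ to second order, computing the linear and quadratic trace terms via the explicit $2m\times 2m$ block structure (the paper does this through $A_0^{-1}A$ and $(\transp{A_0})^{-1}\cConjMat{B}A_0^{-1}B$ rather than inverting the full $4m\times 4m$ matrix, which simplifies your ``main obstacle''). One small notational slip: in Step~1 you write $\langle\cMatA_1,\cMatA_1\rangle$ where you mean $\langle\cSetMatA_1,\cSetMatA_1\rangle$, since the $B$-blocks belong to $\cSetMatA_1$ but not to the single matrix $\cMatA_1$.
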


\begin{proof}
	If $Q_1 = U(\Lambda_0 + n^{-1/2}\tilde{\Lambda})\cConjMat{V}$, then $F$ has the form
	\begin{equation*}
		F = \begin{pmatrix}
			U	& 0	& 0			& 0			\\
			0	& V	& 0			& 0			\\
			0	& 0	& \conj{U}	& 0			\\
			0	& 0	& 0			& \conj{V}
		\end{pmatrix}\left(F_0 + \frac{1}{\sqrt{n}}F_1\right)\begin{pmatrix}
		\transp{U}	& 0			 & 0			& 0			\\
		0			& \transp{V} & 0			& 0			\\
		0			& 0			 & \cConjMat{U}	& 0			\\
		0			& 0			 & 0			& \cConjMat{V}
	\end{pmatrix},
\end{equation*}
where
\begin{equation}\label{F_0,F_1 def}
\begin{gathered}
F_0 = \begin{pmatrix}
0 			& A_0	\\
-\transp{A_0}	& 0
\end{pmatrix}, \quad
F_1 = \begin{pmatrix}
B			& A				\\
-\transp{A}	& \cConjMat{B}
\end{pmatrix}, \\
A_0 = \begin{pmatrix}
-z_0I_m		& \Lambda_0			\\
-\Lambda_0	& -\conj{z}_0I_m
\end{pmatrix}, \quad
A = \begin{pmatrix}
-\cMatPos_{U}		& \tilde{\Lambda}			\\
-\tilde{\Lambda}	& -\cConjMat{\cMatPos_{V}}
\end{pmatrix}, \quad
B = \begin{pmatrix}
\tilde{\skB}_{2,0} & 0							 \\
0				& \cConjMat{\tilde{\skB}_{0,2}}
\end{pmatrix}.
\end{gathered}
\end{equation}
Taking into account that
\begin{equation*}
	\det F_0 = \left[\det 
	\begin{pmatrix}
		z_0		& \lambda_0 \\
		-\lambda_0	& \conj{z}_0
	\end{pmatrix}\det 
	\begin{pmatrix}
		-z_0		& \lambda_0 \\
		-\lambda_0	& -\conj{z}_0
	\end{pmatrix}\right]^m
	= 1,
\end{equation*}
one gets
\begin{equation}\label{log det F}
\begin{split}
\log \det F &= \tr \log (1 + n^{-1/2}F_0^{-1}F_1) \\
&= \frac{1}{\sqrt{n}}\tr F_0^{-1}F_1 - \frac{1}{2n}\tr (F_0^{-1}F_1)^2 + O\left(\frac{\log^3 n}{\sqrt{n^3}}\right)
\end{split}
\end{equation}
uniformly in $U$ and $V$. Further,
\begin{equation}\label{F_0^-1F_1}
F_0^{-1}F_1 = \begin{pmatrix}
\bigl(\transp{A_0}\bigr)^{-1}\transp{A}	& -\bigl(\transp{A_0}\bigr)^{-1}\cConjMat{B} \\
A_0^{-1}B & A_0^{-1}A
\end{pmatrix}
\end{equation}
and
\begin{multline}\label{(F_0^-1F_1)^2}
(F_0^{-1}F_1)^2 = \\
\begin{pmatrix}
\Bigl(\transp{\bigl(A_0^{-1}A\bigr)}\Bigr)^2 - \bigl(\transp{A_0}\bigr)^{-1}\cConjMat{B}A_0^{-1}B	& * \\
* & -A_0^{-1}B\bigl(\transp{A_0}\bigr)^{-1}\cConjMat{B} + \left(A_0^{-1}A\right)^2
\end{pmatrix}.
\end{multline}
Moreover,
\begin{equation}\label{A,B calc}
\begin{split}
A_0^{-1}A &= \begin{pmatrix}
\bar{z}_0\cMatPos_U + \lambda_0\tilde{\Lambda}	& -\bar{z}_0\tilde{\Lambda} + \lambda_0\cMatPos_\cConjMat{V} \\
-\lambda_0\cMatPos_U + z_0\tilde{\Lambda}		& \lambda_0\tilde{\Lambda} + z_0\cMatPos_\cConjMat{V}
\end{pmatrix}, \\
\bigl(\transp{A_0}\bigr)^{-1}\cConjMat{B}A_0^{-1}B &= \begin{pmatrix}
\conj{z}_0^2\cConjMat{\tilde{\skB}_{2,0}}\tilde{\skB}_{2,0} + \lambda_0^2\tilde{\skB}_{0,2}\tilde{\skB}_{2,0}	& * \\
* & \lambda_0^2\cConjMat{\tilde{\skB}_{2,0}}\cConjMat{\tilde{\skB}_{0,2}} + \conj{z}_0^2\tilde{\skB}_{0,2}\cConjMat{\tilde{\skB}_{0,2}}
\end{pmatrix}\ldotp
\end{split}
\end{equation}
Combining \eqref{log det F}--\eqref{A,B calc} and \eqref{f def}, we get
\begin{multline*}\label{last expansion}
	f(U(\Lambda_0 + n^{-1/2}\tilde{\Lambda})\cConjMat{V}, n^{-1/2}\hat{\tilde{\cSetMatA}}) = -\tr\Bigl[\Lambda_0^2 + 2n^{-1/2}\lambda_0\tilde{\Lambda} + n^{-1}\tilde{\Lambda}^2\Bigr] \\
	\shoveright{- \frac{1}{2n}\tr[\cConjMat{\tilde{B}_{2,0}}\tilde{B}_{2,0} + \cConjMat{\tilde{B}_{0,2}}\tilde{B}_{0,2}] - \frac{1}{n} 
	\norm{\tilde{\cSetMatA}_{>1}}^2 + \frac{1}{n^{1/2}}\tr[2\lambda_0 \tilde{\Lambda} + \bar{z}_0 \cMatPos_U + z_0 \cMatPos_\cConjMat{V}]}\\
	\shoveright{- \frac{1}{n}\tr\Bigl[(\lambda_0^2 - \abs{z_0}^2) \tilde{\Lambda}^2 + 2\bar{z}_0\lambda_0\cMatPos_U\tilde{\Lambda}+ 2z_0\lambda_0\cMatPos_\cConjMat{V}\tilde{\Lambda} + \frac{1}{2}(\bar{z}_0\cMatPos_U + z_0\cMatPos_\cConjMat{V})^2 - \cMatPos_U\cMatPos_\cConjMat{V}\Bigr]} \\
	\shoveright{+\frac{1}{2n}\abs{\seccumul}\tr[\conj{z}_0^2\cConjMat{\tilde{B}_{2,0}}\tilde{B}_{2,0} + \lambda_0^2\tilde{B}_{0,2}\tilde{B}_{2,0} + \lambda_0^2\cConjMat{\tilde{B}_{2,0}}\cConjMat{\tilde{B}_{0,2}} + \conj{z}_0^2\tilde{B}_{0,2}\cConjMat{\tilde{B}_{0,2}}]} \\
	+ O\big(n^{-3/2}\log^3 n\big)\lefteqn{\ldotp}
\end{multline*}
Hence the last expansion yields \eqref{f expansion}.
\end{proof}

\begin{corollary}\label{cor:deriv of f_0}
Let the function $f_0 \colon \R^{2^{2m}} \to \Compl$ be defined by \eqref{f_0 def}. Then the following assertions are true:
\begin{enumerate}[label=(\roman*)]
\item\label{st point} the point $\cSetMatA_* = (\Lambda_0, 0)$ is a stationary point of the function $f_0(\Lambda, \hat{\cSetMatA})$;
\item\label{Hess} the Hessian matrix 
of the function $\Re f_0(\Lambda, \hat{\cSetMatA})$ (as a function of real argument) 
at the point $\cSetMatA_* 
$ is negative definite.
\end{enumerate}
\end{corollary}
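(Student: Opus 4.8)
The plan is to read both claims off Lemma~\ref{lem:f(UL V^*) expansion}, specialized to $\cMatPos = 0$ (i.e.\ $z_1 = \dotsb = z_m = z_0$), which is the only case needed since neither $f_0$ nor $\cSetMatA_*$ depends on $\zeta_1, \dotsc, \zeta_m$. In the present subsection $\tilde{h} \equiv 0$, and because $\mathtt{p}_c$ vanishes at $\cSetMatA_{>1} = 0$ while $\Pf\tilde{F}(\cSetMatA_1)$ stays bounded away from $0$ on $\stpointsnbh$ (recall $\tilde{F}(\cSetMatA_*) = F_0$ and $\det F_0 = 1$ from the proof of Lemma~\ref{lem:f(UL V^*) expansion}), the $n^{-1}\mathtt{p}_c$ term contributes only $O(n^{-3/2}\log n)$ there; moreover $\Pf F(\cSetMatA_1)\big|_{\cMatPos = 0} = \Pf\tilde{F}(\cSetMatA_1) = h_0(\cSetMatA_1)$ after the singular value substitution, since conjugation by $\diag\{U, V, \conj{U}, \conj{V}\}$ has determinant $1$. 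Hence \eqref{f expansion} with $\cMatPos = 0$ is, up to $O(n^{-3/2}\log^3 n)$, the second-order Taylor expansion of the $n$-independent, real-analytic function $f_0$ about $\cSetMatA_*$ in the real increments $(\tilde{\Lambda}, \hat{\tilde{\cSetMatA}})$: fixing the increments and letting $n \to \infty$ identifies the $n^0$, $n^{-1/2}$ and $n^{-1}$ coefficients with $f_0(\cSetMatA_*)$, the differential of $f_0$ at $\cSetMatA_*$, and (half) its Hessian, respectively.

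The $n^0$ term is $f_0(\cSetMatA_*) = -m\lambda_0^2 = m(\abs{z_0}^2 - 1)$, matching Lemma~\ref{lem:max of Re f_0}. With $\cMatPos = 0$ the $n^{-1/2}$ term $\tr(\conj{z}_0\cMatPos + z_0\cConjMat{\cMatPos})$ vanishes for every $(\tilde{\Lambda}, \hat{\tilde{\cSetMatA}})$, so the differential of $f_0$ at $\cSetMatA_*$ is zero, which is~\ref{st point}. The $n^{-1}$ term equals $-\mathcal Q_0$, where
\[
\mathcal Q_0 = 2\lambda_0^2\sum_{j=1}^m \tilde{\lambda}_j^2 + \normsized{\tilde{\cSetMatA}_{>1}}^2 + \tfrac12\tr\Bigl[(1 - \abs{\seccumul}\conj{z}_0^2)\cConjMat{\tilde{B}_{2,0}}\tilde{B}_{2,0} + (1 - \abs{\seccumul}z_0^2)\cConjMat{\tilde{B}_{0,2}}\tilde{B}_{0,2} - \abs{\seccumul}\lambda_0^2\tilde{B}_{0,2}\tilde{B}_{2,0} - \abs{\seccumul}\lambda_0^2\cConjMat{\tilde{B}_{2,0}}\cConjMat{\tilde{B}_{0,2}}\Bigr],
\]
so the Hessian of $\Re f_0$ at $\cSetMatA_*$ is a negative multiple of the matrix of $\Re\mathcal Q_0$, and \ref{Hess} is equivalent to positive definiteness of $\Re\mathcal Q_0$. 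Since $\mathcal Q_0$ splits into blocks in $\tilde{\Lambda}$, in $\tilde{\cSetMatA}_{>1}$, and in $(\tilde{B}_{2,0}, \tilde{B}_{0,2})$, and the first two blocks are obviously positive definite (here $\lambda_0 = \sqrt{1 - \abs{z_0}^2} > 0$, which holds in both \ref{th1(i)} and \ref{th1(ii)}), it remains to treat the trace block.

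This is the only substantial point, and the one where the hypotheses on $(z_0, \seccumul)$ enter. Using $\tr\cConjMat{\tilde{B}_{2,0}}\tilde{B}_{2,0} = \normsized{\tilde{B}_{2,0}}^2$, $\tr\cConjMat{\tilde{B}_{0,2}}\tilde{B}_{0,2} = \normsized{\tilde{B}_{0,2}}^2$ and $\tr\cConjMat{\tilde{B}_{2,0}}\cConjMat{\tilde{B}_{0,2}} = \overline{\tr\tilde{B}_{0,2}\tilde{B}_{2,0}}$, the real part of the trace block equals $(1 - \abs{\seccumul}\Re z_0^2)(\normsized{\tilde{B}_{2,0}}^2 + \normsized{\tilde{B}_{0,2}}^2) - 2\abs{\seccumul}\lambda_0^2\,\Re\tr\tilde{B}_{0,2}\tilde{B}_{2,0}$. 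Cauchy--Schwarz gives $\abs{\tr\tilde{B}_{0,2}\tilde{B}_{2,0}} \le \normsized{\tilde{B}_{0,2}}\normsized{\tilde{B}_{2,0}} \le \tfrac12(\normsized{\tilde{B}_{2,0}}^2 + \normsized{\tilde{B}_{0,2}}^2)$, so this is bounded below by $\bigl(1 - \abs{\seccumul}(\Re z_0^2 + \lambda_0^2)\bigr)(\normsized{\tilde{B}_{2,0}}^2 + \normsized{\tilde{B}_{0,2}}^2)$; and since $\Re z_0^2 + \lambda_0^2 = \Re z_0^2 + 1 - \abs{z_0}^2 = 1 - 2(\Im z_0)^2$, the coefficient is $1 - \abs{\seccumul}\bigl(1 - 2(\Im z_0)^2\bigr)$. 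In case~\ref{th1(i)} it is $\ge 1 - \abs{\seccumul} > 0$; in case~\ref{th1(ii)} it equals $2(\Im z_0)^2 > 0$ because $z_0 \notin \R$. Hence $\Re\mathcal Q_0$ is positive definite, which proves~\ref{Hess}. The coefficient $1 - \abs{\seccumul}(1 - 2(\Im z_0)^2)$ vanishes precisely when $\abs{\seccumul} = 1$ and $z_0 \in \R$, the case excluded in Theorem~\ref{th1}, in agreement with the non-uniqueness of the maximizer observed in Lemma~\ref{lem:max of Re f_0}; this estimate is the part of the argument requiring the most care.
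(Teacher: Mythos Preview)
Your proof is correct and follows essentially the same route as the paper: specialize Lemma~\ref{lem:f(UL V^*) expansion} to $\cMatPos=0$ so that the expansion becomes the Taylor formula of $f_0$ at $(\Lambda_0,0)$, read off \ref{st point} from the vanishing $n^{-1/2}$ term, and deduce \ref{Hess} from positive definiteness of the $n^{-1}$ quadratic form. The only difference is in the verification of positive definiteness of the $(\tilde B_{2,0},\tilde B_{0,2})$ block: the paper splits into real and imaginary parts entrywise via \eqref{B:variables separation} and checks the $2\times 2$ determinant $(1-\abs{\seccumul}\Re z_0^2)^2-\abs{\seccumul}^2\lambda_0^4$, whereas you use Cauchy--Schwarz plus AM--GM on the full matrices to obtain the lower bound with coefficient $1-\abs{\seccumul}(1-2(\Im z_0)^2)$; both routes are equivalent and yield strict positivity precisely under the hypotheses \ref{th1(i)} and \ref{th1(ii)}.
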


\begin{proof}
Let us put $\cMatPos = 0$ and $\mathtt{p}_c = 0$. Then
\begin{equation*}
f_0(\Lambda, \hat{\cSetMatA}) = f(\Lambda, \hat{\cSetMatA}).
\end{equation*}
Therefore it is possible to consider the expansion~\eqref{f expansion} as Taylor formula for $f_0(\Lambda, \hat{\cSetMatA})$ at the point $(\Lambda_0, 0)$. We obtain
\begin{multline*} 
f_0(\Lambda_0 + n^{-\frac{1}{2}}\tilde{\Lambda}, n^{-\frac{1}{2}}\hat{\tilde{\cSetMatA}}) = {- m\lambda_0^2} - n^{-1} 2\lambda_0^2 \tr \tilde{\Lambda}^2 - n^{-1}\norm{\tilde{\cSetMatA}_{>1}}^2 \\
\shoveright{- \frac{1}{2n}\tr\Bigl\lbrack(1 - \abs{\seccumul}\conj{z}_0^2)\cConjMat{\tilde{B}_{2,0}}\tilde{B}_{2,0} + (1 - \abs{\seccumul}z_0^2)\cConjMat{\tilde{B}_{0,2}}\tilde{B}_{0,2}} \\
- \abs{\seccumul}\lambda_0^2\tilde{B}_{0,2}\tilde{B}_{2,0} - \abs{\seccumul}\lambda_0^2\cConjMat{\tilde{B}_{2,0}}\cConjMat{\tilde{B}_{0,2}}\Bigr\rbrack + O\big(n^{-\frac{3}{2}}\log^3 n\big)\lefteqn{.}
\end{multline*}
Thus the gradient of the function $f_0(\Lambda, \hat{\cSetMatA})$ is evidently zero at the point $(\Lambda_0, 0)$. Assertion \ref{st point} is proven. Note that
\begin{multline}\label{B:variables separation}
\frac{1}{2n}\tr\Bigl\lbrack(1 - \abs{\seccumul}\conj{z}_0^2)\cConjMat{\tilde{B}_{2,0}}\tilde{B}_{2,0} + (1 - \abs{\seccumul}z_0^2)\cConjMat{\tilde{B}_{0,2}}\tilde{B}_{0,2} \\
\shoveleft{- \abs{\seccumul}\lambda_0^2\tilde{B}_{0,2}\tilde{B}_{2,0} - \abs{\seccumul}\lambda_0^2\cConjMat{\tilde{B}_{2,0}}\cConjMat{\tilde{B}_{0,2}} \Bigr\rbrack} \\
\shoveright{= \frac{1}{n} \sum\limits_{\alpha \in \indexset_{m,2}} \biggl\lbrack(1 - \abs{\seccumul}\conj{z}_0^2)\abs{\cSclA_{\alpha\emptyindex}^{(2,0)}}^2 + (1 - \abs{\seccumul}z_0^2)\abs{\cSclA_{\emptyindex\alpha}^{(0,2)}}^2} \\
+ \abs{\seccumul}\lambda_0^2 \left(\cSclA_{\alpha\emptyindex}^{(2,0)} \cSclA_{\emptyindex\alpha}^{(0,2)} + \conj{\cSclA_{\alpha\emptyindex}^{(2,0)} \cSclA_{\emptyindex\alpha}^{(0,2)}}\right) \biggr\rbrack\lefteqn{.}
\end{multline}
Hence, in order to prove assertion \ref{Hess} it is enough to show that the following quadratic form of $x_1$ and $x_2$
\begin{equation*}
(1 - \abs{\seccumul}\Re z_0^2)x_1^2 + (1 - \abs{\seccumul}\Re z_0^2)x_2^2 \pm 2\abs{\seccumul}\lambda_0^2 x_1 x_2
\end{equation*}
is positive definite. A straightforward check yields
\begin{align}
\notag
1 - \abs{\seccumul}\Re z_0^2 &> 0; \\
\label{positive det}
(1 - \abs{\seccumul}\Re z_0^2)^2 - \abs{\seccumul}^2\lambda_0^4 &\ge (1 - \abs{\seccumul z_0^2})^2 - \abs{\seccumul}^2\lambda_0^4 \ge 0.
\end{align}
Besides, if parameters $\seccumul$ and $z_0$ are such those in the assertion of Theorem~\ref{th1} then the inequality~\eqref{positive det} is strict.
\end{proof}

\begin{lemma}\label{lem:est for Re f}
	Let $\tilde{f}(\cMatA_1, \hat{\cSetMatA}) = f(\cMatA_1, \hat{\cSetMatA}) - f(\Lambda_0, 0)$. Then for sufficiently large~$n$
	\begin{equation*} 
		\max_{\frac{\log n}{\sqrt{n}} \le \norm{\Lambda - \Lambda_0} + \norm{\hat{\cSetMatA}} \le r} \Re \tilde{f}(U\Lambda\cConjMat{V}, \hat{\cSetMatA}) \le -C\frac{\log^2 n}{n}
	\end{equation*}
	uniformly in $U$ and $V$.
\end{lemma}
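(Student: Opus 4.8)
The plan is to combine the global bound of Lemma~\ref{lem:max of Re f_0} with the local quadratic behaviour at the stationary point established in Corollary~\ref{cor:deriv of f_0}, then add the $n^{-1/2}$ correction $f_r$ as a lower-order perturbation. First I would recall that $f = f_0 + n^{-1/2}f_r$ with $f_r$ uniformly bounded (with bounded derivatives) on the compact set $\norm{\Lambda} + \norm{\hat{\cSetMatA}} \le r$, uniformly in $U,V$; this follows from \eqref{f def}--\eqref{h def} and \eqref{f_r def} since $h_0$ is bounded away from zero near the stationary manifold. Write $\tilde f = f - f(\Lambda_0,0)$ and split the annulus $\frac{\log n}{\sqrt n} \le \norm{\Lambda - \Lambda_0} + \norm{\hat{\cSetMatA}} \le r$ into an inner part, say $\norm{\Lambda - \Lambda_0} + \norm{\hat{\cSetMatA}} \le \delta$ for a small fixed $\delta > 0$, and an outer part where $\norm{\Lambda - \Lambda_0} + \norm{\hat{\cSetMatA}} \ge \delta$.

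On the outer part, Lemma~\ref{lem:max of Re f_0} together with the fact that $\Re f_0$ is continuous on a compact set and attains its maximum \emph{only} at the point \eqref{point of max} gives a fixed gap: $\Re f_0(\Lambda,\hat{\cSetMatA}) \le m(\abs{z_0}^2 - 1) - C_1$ for some $C_1 = C_1(\delta) > 0$, uniformly in $U,V$ (the left side does not depend on $U,V$). Adding $n^{-1/2}\Re f_r = O(n^{-1/2})$ keeps $\Re \tilde f \le -C_1/2$ for large $n$, which is far stronger than the claimed $-C\log^2 n / n$. On the inner part I would use the Taylor expansion \eqref{f expansion}: writing $\Lambda = \Lambda_0 + n^{-1/2}\tilde\Lambda$, $\hat{\cSetMatA} = n^{-1/2}\hat{\tilde{\cSetMatA}}$ is the wrong scaling here, so instead I apply \eqref{f expansion} in its native form by setting $\tilde\Lambda = \sqrt n(\Lambda - \Lambda_0)$, $\hat{\tilde{\cSetMatA}} = \sqrt n\,\hat{\cSetMatA}$; the hypothesis $\norm{\Lambda - \Lambda_0} + \norm{\hat{\cSetMatA}} \le \delta$ makes $\norm{\tilde\Lambda} + \norm{\hat{\tilde{\cSetMatA}}} \le \delta\sqrt n$, which is larger than $\log n$, so \eqref{f expansion} as stated does not directly apply. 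The cleaner route is to expand $\Re f_0$ directly by Taylor's theorem with remainder around $(\Lambda_0,0)$: by Corollary~\ref{cor:deriv of f_0}, the gradient vanishes and the Hessian is negative definite, so $\Re f_0(\Lambda,\hat{\cSetMatA}) - \Re f_0(\Lambda_0,0) \le -c_2(\norm{\Lambda - \Lambda_0}^2 + \norm{\hat{\cSetMatA}}^2)$ for some $c_2 > 0$, uniformly on the ball of radius $\delta$ (shrinking $\delta$ if needed so the cubic remainder is dominated). Then $\Re \tilde f \le -c_2 \rho^2 + C_3 n^{-1/2}\rho$ where $\rho = \norm{\Lambda - \Lambda_0} + \norm{\hat{\cSetMatA}} \ge \frac{\log n}{\sqrt n}$; since the negative quadratic term beats the linear perturbation once $\rho \gtrsim n^{-1/2}$, one gets $\Re \tilde f \le -\tfrac{c_2}{2}\rho^2 \le -\tfrac{c_2}{2}\frac{\log^2 n}{n}$ for large $n$, which is the asserted bound.

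The main obstacle is bookkeeping the $f_r$ term: one must verify that $f_r$ and its first derivative are bounded uniformly in $n$, $U$, $V$ on a fixed neighbourhood of the stationary manifold — this needs $h_0$ bounded below there (so $\log h$ is well-behaved) and control of $\tilde h$, $\mathtt{p}_c$ from \eqref{h def}. A secondary point is that the Hessian bound from Corollary~\ref{cor:deriv of f_0} is stated at the single point $(\Lambda_0, 0)$ with $U,V$ fixed, but since $f_0(U\Lambda\cConjMat V, \hat{\cSetMatA}) = f_0(\Lambda,\hat{\cSetMatA})$ the estimate is automatically uniform in $U,V$; one only needs continuity of the Hessian to upgrade the pointwise negative-definiteness to a uniform quadratic upper bound on a small ball, which is routine. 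Everything else is the standard steepest-descent annulus argument.
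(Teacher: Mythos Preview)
Your strategy matches the paper's: split the annulus at some fixed $\delta$, use Lemma~\ref{lem:max of Re f_0} plus compactness on the outer shell, and use the negative-definite Hessian from Corollary~\ref{cor:deriv of f_0} together with a first-order bound on $n^{-1/2}f_r$ on the inner shell. The paper phrases the inner-shell step as a radial-derivative monotonicity argument rather than a direct quadratic Taylor bound, but that is cosmetic.

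There is, however, one genuine gap in your inner-shell estimate. You claim $\Re\tilde f \le -c_2\rho^2 + C_3 n^{-1/2}\rho$ with $\rho = \norm{\Lambda-\Lambda_0}+\normsized{\hat{\cSetMatA}}$, implicitly asserting $\abs{f_r(U\Lambda\cConjMat{V},\hat{\cSetMatA}) - f_r(\Lambda_0,0)} \le C_3\rho$. But $f_r$, unlike $f_0$, is \emph{not} invariant under $\cMatA_1\mapsto U\cMatA_1\cConjMat{V}$: a Lipschitz bound on $f_r$ in the $(\Lambda,\hat{\cSetMatA})$-variables (with $U,V$ fixed) only gives $\abs{f_r(U\Lambda\cConjMat{V},\hat{\cSetMatA}) - f_r(U\Lambda_0\cConjMat{V},0)} \le C\rho$, and there is a leftover term $f_r(U\Lambda_0\cConjMat{V},0) - f_r(\Lambda_0,0)$ which does not vanish as $\rho\to 0$ and is not controlled by $\rho$. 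The paper handles this explicitly at the very end of its proof, invoking Lemma~\ref{lem:f(UL V^*) expansion} to get $\Re f(U\Lambda_0\cConjMat{V},0) - f(\Lambda_0,0) = O(n^{-1})$ uniformly in $U,V$. Once you insert this, your bound becomes $\Re\tilde f \le -c_2\rho^2 + C_3 n^{-1/2}\rho + O(n^{-1})$, and since $O(n^{-1}) = o(\log^2 n/n)$ your conclusion still follows. So the fix is a one-line appeal to Lemma~\ref{lem:f(UL V^*) expansion}, but without it the inequality you wrote is not justified.
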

\begin{proof}
	First let us check that the first and the second derivatives of $f_r$ are bounded in the $\delta$-neighborhood of $\Lambda_0$, where $f_r$ is defined in \eqref{f_r def} and $\delta$ is $n$-independent. Indeed, since $h$ and $h_0$ are polynomials and $h \rightrightarrows h_0$ on compacts
	\begin{align*}
		\abs{\frac{1}{\sqrt{n}}\frac{\partial \Re f_r}{\partial x}} &\le \abs{\frac{1}{\sqrt{n}}\frac{\partial f_r}{\partial x}} = \abs{\frac{\partial (f - f_0)}{\partial x}} = \abs{\frac{\partial (\log h - \log h_0)}{\partial x}} \\
		&\le \abs{\frac{1}{h_0} \cdot \frac{\partial h_0}{\partial x} - \frac{1}{h} \cdot \frac{\partial h}{\partial x}} \le \frac{C}{\sqrt{n}},
	\end{align*}
	where $x$ is either $\lambda_j$ or an entry of $\cMatA_{p,s}$, $(p,s) \ne (1,1)$. Let $\Lambda_E$ be a real diagonal matrix of unit norm and let $\hat{\cSetMatA}_E$, $\normsized{\hat{\cSetMatA}_E} = 1$, be a set of matrices which sizes correspond to those of $\hat{\cSetMatA}$. Then for any $\Lambda_E$ and $\hat{\cSetMatA}_E$ and for $\frac{\log n}{\sqrt{n}} \le t \le \delta$ we have
	\begin{equation*}
		\begin{split}
			\der{}{t} \Re\tilde{f}(U(\Lambda_0 + t\Lambda_E)\cConjMat{V}, t\hat{\cSetMatA}_E) &= \langle \nabla_{\Lambda, \hat{\cSetMatA}} \Re f_0(U(\Lambda_0 + t\Lambda_E)\cConjMat{V}, t\hat{\cSetMatA}_E), v(E) \rangle \\
			&\quad{}+ n^{-1/2} \langle \nabla_{\Lambda, \hat{\cSetMatA}} \Re f_r(U(\Lambda_0 + t\Lambda_E)\cConjMat{V}, t\hat{\cSetMatA}_E), v(E) \rangle \\
			&= \langle \nabla_{\Lambda, \hat{\cSetMatA}} \Re f_0(\Lambda_0 + t\Lambda_E, t\hat{\cSetMatA}_E), v(E) \rangle + O(n^{-1/2}),
		\end{split}
	\end{equation*}
	where $v(E)$ denotes a vector witch components are all the real variables of $\Lambda_E$ and $\hat{\cSetMatA}_E$ and $\langle \cdot, \cdot \rangle$ is a standard real scalar product. Expanding the scalar product by Taylor formula and considering that $\nabla_{\Lambda, \hat{\cSetMatA}} f_0(\Lambda_0, 0) = 0$, we obtain
	\begin{equation*}
		\begin{split}
			\der{}{t} \Re\tilde{f}(U(\Lambda_0 + t\Lambda_E)\cConjMat{V}, t\hat{\cSetMatA}_E) &= t\langle (\Re f_0)''(\Lambda_0, 0)v(E), v(E) \rangle + r_1 + O(n^{-1/2}),
		\end{split}
	\end{equation*}
	where $(\Re f_0)''$ is a matrix of second order derivatives of $\Re f_0$ w.r.t.\ $\Lambda$, $\Re\hat{\cSetMatA}$ and $\Im\hat{\cSetMatA}$ and $\abs{r_1} \le Ct^2$. $(\Re f_0)''(\Lambda_0, 0)$ is negative definite according to Corollary~\ref{cor:deriv of f_0}. Hence $\der{}{t} \Re\tilde{f}(U(\Lambda_0 + t\Lambda_E)\cConjMat{V}, t\hat{\cSetMatA}_E)$ is negative and
	\begin{equation}\label{Re f: nbh est}
	\begin{split}
	\max_{\frac{\log n}{\sqrt{n}} \le \norm{\Lambda - \Lambda_0} + \norm{\hat{\cSetMatA}} \le \delta} \Re \tilde{f}(U\Lambda\cConjMat{V}, \hat{\cSetMatA}) &= \max_{\norm{\Lambda - \Lambda_0} + \norm{\hat{\cSetMatA}} = \frac{\log n}{\sqrt{n}}} \Re \tilde{f}(U\Lambda\cConjMat{V}, \hat{\cSetMatA}) \\
	&\le \Re f(U\Lambda_0 \cConjMat{V}, 0) - C\frac{\log^2 n}{n} - f(\Lambda_0, 0).
	\end{split}
	\end{equation}
	Notice that $f_r$ is bounded from above uniformly in $n$. This fact and Lemma \ref{lem:max of Re f_0} imply that $\delta$ in \eqref{Re f: nbh est} can be replaced by $r$
	\begin{equation*}
		\max_{\frac{\log n}{\sqrt{n}} \le \norm{\Lambda - \Lambda_0} + \norm{\hat{\cSetMatA}} \le r} \Re \tilde{f}(U\Lambda\cConjMat{V}, \hat{\cSetMatA}) \le \Re f(U\Lambda_0 \cConjMat{V}, 0) - f(\Lambda_0, 0) - C\frac{\log^2 n}{n}.
	\end{equation*}
	It remains to deduce from Lemma \ref{lem:f(UL V^*) expansion} that $\Re f(U\Lambda_0 \cConjMat{V}, 0) - f(\Lambda_0, 0) = O(n^{-1})$ uniformly in $U$ and $V$.
\end{proof}

Lemma \ref{lem:est for Re f} and the formula~\eqref{IR SVD} yield
\begin{multline*}
	\CF_m = Cn^{c_m}e^{nf(\Lambda_0, 0)} \Bigg(\int\limits_{\stpointsnbh} \Vanddet^2(\Lambda^2) \prod\limits_{j = 1}^m \lambda_j g(\cSetMatA) e^{-c_m f(U\Lambda \cConjMat{V}, \hat{\cSetMatA})} \\
	\times e^{n\tilde{f}(U\Lambda \cConjMat{V}, \hat{\cSetMatA})} d\mu(U) d\mu(V) d\Lambda d\hat{\cSetMatA} + O(e^{-C_1\log^2 n})\Bigg),
\end{multline*}
where $\stpointsnbh$ is defined in \eqref{stpoinnbh def}.
Changing the variables $\Lambda = \Lambda_0 + \frac{1}{\sqrt{n}}\tilde{\Lambda}$, $\hat{\cSetMatA} = \frac{1}{\sqrt{n}}\hat{\tilde{\cSetMatA}}$ and expanding $f$ according to Lemma~\ref{lem:f(UL V^*) expansion} we obtain
\begin{equation}\label{n-indep int}
\begin{split}
\CF_m &= C\mathsf{k}_n \int\limits_{\sqrt{n}\stpointsnbh} \Vanddet^2(\tilde{\Lambda}) g(\cSetMatA_*) e^{-c_m f(\Lambda_0, 0)} d\mu(U) d\mu(V) d\tilde{\Lambda} d\hat{\tilde{\cSetMatA}} (1 + o(1)) \\
&\quad \times \exp\Bigl\lbrace -\frac{1}{2}\tr(2\lambda_0\tilde{\Lambda} + \bar{z}_0\cMatPos_U + z_0\cMatPos_\cConjMat{V})^2 + \tr\cMatPos_U\cMatPos_\cConjMat{V} - \norm{\tilde{\cSetMatA}_{>1}}^2 \\
&\quad\phantom{\times \exp\Bigl\lbrace} - \frac{1}{2}\tr\Bigl[(1 - \abs{\seccumul}\conj{z}_0^2)\cConjMat{\tilde{B}_{2,0}}\tilde{B}_{2,0} + (1 - \abs{\seccumul}z_0^2)\cConjMat{\tilde{B}_{0,2}}\tilde{B}_{0,2} \\
&\quad\phantom{\times \exp\Bigl\lbrace - \frac{1}{2}\tr\Bigl[} - \abs{\seccumul}\lambda_0^2\tilde{B}_{0,2}\tilde{B}_{2,0} - \abs{\seccumul}\lambda_0^2\cConjMat{\tilde{B}_{2,0}}\cConjMat{\tilde{B}_{0,2}}\Bigr]\Bigr\rbrace,
\end{split}
\end{equation}
where
\begin{equation}\label{K_n def}
\mathsf{k}_n = n^{m^2/2}e^{-mn\lambda_0^2 + \sqrt{n}\tr \left(\bar{z}_0\cMatPos + z_0\cConjMat{\cMatPos}\right)}.
\end{equation}
Since \eqref{B:variables separation} the integral over $\hat{\tilde{\cSetMatA}}$ can be computed separately over real and imaginary parts of the entries of $\hat{\tilde{\cSetMatA}}$. Because $g(\cSetMatA_*) e^{-c_m f(\Lambda_0, 0)} = 1 + o(1)$, the integration implies
\begin{equation}\label{n-indep int 1}
\begin{split}
\CF_m &= C\mathsf{k}_n d_1(\seccumul)^{-m} \int \Vanddet^2(\tilde{\Lambda}) 
d\mu(U) d\mu(V) d\tilde{\Lambda} (1 + o(1)) \\
&\qquad\qquad\quad \times \exp\Bigl\lbrace -\frac{1}{2}\tr(2\lambda_0\tilde{\Lambda} + \bar{z}_0\cMatPos_U + z_0\cMatPos_\cConjMat{V})^2 + \tr\cMatPos_U\cMatPos_\cConjMat{V} \Bigr\rbrace,
\end{split}
\end{equation}
where
\begin{equation}\label{d1 def}
d_1(\seccumul) = \abs{1 - \abs{\seccumul}z_0^2}^2 - \abs{\seccumul}^2\lambda_0^4.
\end{equation}

Let us change the variables $V = WU$. Taking into account that the Haar measure is invariant w.r.t.\ shifts 
we get
\begin{equation*}
	\begin{split}
	\CF_m &= C\mathsf{k}_n d_1(\seccumul)^{-m} \int\limits_{\R^m} \int\limits_{U(m)} \int\limits_{U(m)}\Vanddet^2(\tilde{\Lambda}) d\mu(U) d\mu(W) d\tilde{\Lambda}(1 + o(1)) \\
	&\quad \times \exp\left\{-\frac{1}{2}\tr(2\lambda_0\tilde{\Lambda} + \cConjMat{U}(\bar{z}_0\cMatPos + z_0\cMatPos_\cConjMat{W})U)^2 + \tr\cMatPos \cConjMat{W}\cConjMat{\cMatPos} W\right\} \\
	&= C\mathsf{k}_n d_1(\seccumul)^{-m} \int\limits_{\R^m} \int\limits_{U(m)} \int\limits_{U(m)} \Vanddet^2(\tilde{\Lambda}) d\mu(U) d\mu(W) d\tilde{\Lambda}(1 + o(1)) \\
	&\quad \times \exp\left\{-\frac{1}{2}\tr(2\lambda_0U\tilde{\Lambda}\cConjMat{U} + (\bar{z}_0\cMatPos + z_0\cMatPos_\cConjMat{W}))^2 + \tr\cMatPos \cConjMat{W}\cConjMat{\cMatPos} W\right\}.
	\end{split}
\end{equation*}
The next step is to change the variables $H = U\tilde{\Lambda}\cConjMat{U}$. The Jacobian is $\frac{\prod_{j = 1}^{m - 1} j!}{(2\pi)^{m(m - 1)/2}}\Vanddet^{-2}(\tilde{\Lambda})$ (see e.g.\ \cite{Hu:63}). Thus
\begin{equation*}
	\begin{split}
		\CF_m &= C\mathsf{k}_n d_1(\seccumul)^{-m} \int\limits_{\herm_m} \int\limits_{U(m)}  d\mu(W) dH(1 + o(1)) \\
		&\qquad{}\times \exp\left\{-\frac{1}{2}\tr(2\lambda_0H + (\bar{z}_0\cMatPos + z_0\cMatPos_\cConjMat{W}))^2 + \tr\cMatPos \cConjMat{W}\cConjMat{\cMatPos} W\right\},
	\end{split}
\end{equation*}
where $\herm_m$ is a space of hermitian $m \times m$ matrices and
\begin{equation*}
	dH = \prod\limits_{j = 1}^m d(H)_{jj} \prod\limits_{j < k} d\Re (H)_{jk} d\Im (H)_{jk}.
\end{equation*}
The Gaussian integration over $H$ implies
\begin{equation}\label{last asympt}
\CF_m = C\mathsf{k}_n d_1(\seccumul)^{-m} \int\limits_{U(m)} \exp\left\{ \tr\cMatPos \cConjMat{W}\cConjMat{\cMatPos} W\right\} d\mu(W) (1 + o(1)).
\end{equation}

For computing the integral over the unitary group, the following Harish-Chan\-dra/It\-syk\-son--Zuber formula is used

\begin{proposition}\label{pr:H-C/I--Z formula}
	Let $A$ and $B$ be normal $d \times d$ matrices with distinct eigenvalues $\{a_j\}_{j = 1}^d$ and $\{b_j\}_{j = 1}^d$ respectively. Then
	\begin{equation*} 
		\int\limits_{U(d)} \exp\{z\tr A\cConjMat{U}BU\}d\mu(U) = \bigg(\prod\limits_{j = 1}^{d - 1} j!\bigg) \frac{\det\{\exp(za_jb_k)\}_{j,k = 1}^d}{z^{(d^2 - d)/2}\Vanddet(A)\Vanddet(B)},
	\end{equation*}
	where $z$ is some constant, $\mu$ is a Haar measure, and $\Vanddet(A) = \prod\limits_{j > k}(a_j - a_k)$.
\end{proposition}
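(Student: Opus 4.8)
The plan is to reduce the statement to the classical Harish-Chandra/Itzykson--Zuber identity for Hermitian matrices and then to establish that identity by the heat-equation argument. First I would use that $A$ and $B$, being normal, are unitarily diagonalizable, $A = U_A\Lambda_A\cConjMat{U_A}$, $B = U_B\Lambda_B\cConjMat{U_B}$ with $\Lambda_A = \diag\{a_j\}$, $\Lambda_B = \diag\{b_j\}$; by the left and right invariance of the Haar measure the substitution $U \mapsto \cConjMat{U_B}UU_A$ turns the left-hand side into $\int_{U(d)}\exp\{z\tr\Lambda_A\cConjMat{U}\Lambda_B U\}\,d\mu(U)$. Then I would absorb $z$ into $\Lambda_A$ (replace $a_j$ by $za_j$, which multiplies $\Vanddet(A)$ by $z^{(d^2-d)/2}$ and leaves $\det\{\exp(za_jb_k)\}$ unchanged), so that it suffices to treat $z = 1$ with diagonal $A$ and $B$.

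Next I would remove the complex character of the eigenvalues by analytic continuation. For $\{a_j\}$ pairwise distinct and $\{b_k\}$ pairwise distinct both sides are holomorphic in $(a_1,\dots,a_d,b_1,\dots,b_d)$ on the connected domain $\{a_j\ \text{distinct}\}\times\{b_k\ \text{distinct}\}\subset\Compl^{2d}$: the left-hand side because the integrand is entire and $U(d)$ is compact, so one may differentiate under the integral sign; the right-hand side because $\det\{\exp(a_jb_k)\}$ is divisible by $\Vanddet(A)$ and by $\Vanddet(B)$. Since the real slice $\{a_j\in\R\ \text{distinct}\}\times\{b_k\in\R\ \text{distinct}\}$ is a uniqueness set for such functions, it is enough to prove the identity for real $a_j$ and real $b_k$, i.e.\ in the Hermitian case.

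For the Hermitian case I would run the radial-Laplacian argument. Fixing distinct reals $a_1,\dots,a_d$, set $\Phi(M) = \int_{U(d)}\exp\{\tr\Lambda_A\cConjMat{U}MU\}\,d\mu(U)$ for $M\in\herm_d$; writing $\tr(\Lambda_A\cConjMat{U}MU) = \langle U\Lambda_A\cConjMat{U}, M\rangle$ and applying the flat Laplacian $\Delta_M$ on $\herm_d$ under the integral sign gives $\Delta_M\Phi = \bigl(\sum_j a_j^2\bigr)\Phi$. As $\Phi$ depends on $M$ only through its eigenvalues $\mu_1,\dots,\mu_d$, say $\Phi = g(\mu)$, the expression for the radial part of $\Delta_M$ together with the harmonicity of $\Vanddet(\mu) = \prod_{j>k}(\mu_j-\mu_k)$ shows that $h := \Vanddet(\mu)\,g$ solves the constant-coefficient equation $\sum_j\partial_{\mu_j}^2 h = \bigl(\sum_j a_j^2\bigr)h$ and is antisymmetric in $\mu$; one checks directly that $\det\{\exp(a_j\mu_k)\}$ is such a solution. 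The hard part is uniqueness, i.e.\ that $\Phi$ is proportional to precisely this solution: I would deduce it from Harish-Chandra's description of the radial parts of the whole algebra of invariant differential operators on $\herm_d$ (on antiinvariant functions they become constant-coefficient operators whose symbols are the symmetric polynomials evaluated at $a$), so that the Fourier transform of $h$ is supported on the orbit $\{(a_{\sigma(1)},\dots,a_{\sigma(d)}) : \sigma\in S_d\}$, forcing $h$ to be a finite exponential sum, after which antisymmetry forces $h = c(a)\det\{\exp(a_j\mu_k)\}$. The constant is then fixed by letting $M\to0$: $\Phi(0) = 1$, while $\det\{\exp(a_j\mu_k)\} = \bigl(\prod_{j=1}^{d-1}j!\bigr)^{-1}\Vanddet(A)\Vanddet(\mu)\,(1+o(1))$ as $\mu\to0$, whence $c(a) = \bigl(\prod_{j=1}^{d-1}j!\bigr)/\Vanddet(A)$ and $g(\mu) = \bigl(\prod_{j=1}^{d-1}j!\bigr)\det\{\exp(a_j\mu_k)\}/\bigl(\Vanddet(A)\Vanddet(\mu)\bigr)$; evaluating at an $M$ with eigenvalues $b_k$ and undoing the reduction of the first paragraph yields the claimed identity. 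In a written-up proof one may of course shortcut the whole of this paragraph by simply invoking the classical Hermitian Harish-Chandra/Itzykson--Zuber formula and retaining only the two reduction steps.
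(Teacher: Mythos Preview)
The paper does not prove this proposition at all: immediately after stating it, it simply writes ``For the proof see, e.g., \cite[Appendix~5]{Me:91}.'' So there is no in-paper argument to compare your proposal against; your sketch already goes well beyond what the paper supplies.

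Your reduction steps (diagonalization via Haar invariance, absorbing $z$, analytic continuation from the real slice) are correct and standard. The radial-Laplacian set-up is also fine. The only place where your write-up is not quite rigorous is the uniqueness step: you say the Fourier transform of $h=\Vanddet(\mu)g(\mu)$ is supported on the $S_d$-orbit of $(a_1,\dots,a_d)$, but $h$ grows exponentially in $\mu$ and has no Fourier transform in the ordinary sense, so this argument would need to be recast (e.g.\ via Harish-Chandra's characterization of elementary spherical functions, or via the explicit ODE/recursion along each coordinate). You already anticipate this by noting that one may simply invoke the classical Hermitian HCIZ formula and keep only the two reductions; that is exactly what the paper does, except that it also outsources the reductions by citing Mehta directly.
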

For the proof see, e.g., \cite[Appendix 5]{Me:91}.

Applying the Harish-Chan\-dra/It\-syk\-son--Zuber formula to \eqref{last asympt} we obtain
\begin{equation*} 
	\CF_m = C\mathsf{k}_n e^{-m\log d_1(\seccumul)} \frac{\det \{e^{\zeta_j\cConjScl{\zeta}_k}\}_{j,k = 1}^m}{\abs{\Vanddet(\cMatPos)}^2} (1 + o(1)),
\end{equation*}
which in combination with \eqref{F_1 behavior} yields the result of Theorem \ref{th1}.
\subsection{General case}

In the general case the proof proceeds by the same scheme as in the case of zero high cumulants. In this subsection we focus on the crucial distinctions from the partial case considered above and refine the corresponding assertions from the previous subsection. 


At the point we are ready to generalize Lemma \ref{lem:f(UL V^*) expansion}.
\begin{lemma}
	Let $\normsized{\tilde{\Lambda}} + \normsized[\big]{\hat{\tilde{\cSetMatA}}} \le \log n$. Then uniformly in $U$ and $V$
	\begin{multline} \label{f expansion gen}
	f(U(\Lambda_0 + n^{-1/2}\tilde{\Lambda})\cConjMat{V}, n^{-1/2}\hat{\tilde{\cSetMatA}}) = {- m\lambda_0^2} + n^{-1/2} \tr (\conj{z}_0\cMatPos + z_0\cConjMat{\cMatPos}) \\
	\shoveright{- \frac{1}{2n} \tr(2\lambda_0\tilde{\Lambda} + \conj{z}_0\cMatPos_U + z_0\cConjMat{\cMatPos_V})^2 + \frac{1}{n} \tr\cMatPos_U\cConjMat{\cMatPos_V}} \\
	\shoveright{- \frac{1}{2n}\tr\Bigl[(1 - \abs{\seccumul}\conj{z}_0^2)\cConjMat{\tilde{B}_{2,0}}\tilde{B}_{2,0} + (1 - \abs{\seccumul}z_0^2)\cConjMat{\tilde{B}_{0,2}}\tilde{B}_{0,2}} \\
	\shoveright{- \abs{\seccumul}\lambda_0^2\tilde{B}_{0,2}\tilde{B}_{2,0} - \abs{\seccumul}\lambda_0^2\cConjMat{\tilde{B}_{2,0}}\cConjMat{\tilde{B}_{0,2}}\Bigr] - \frac{1}{n}\norm{\tilde{\cSetMatA}_{>1}}^2} \\
	+ n^{-1}\lambda_0^2\sqrt{\cumul{2}{2}} \tr\bigl[(\wedge^2V\cConjMat{U})\tilde{\cMatA}_2 + \cConjMat{\tilde{\cMatA}_2}(\wedge^2U\cConjMat{V})\bigr] + O\big(n^{-3/2}\log^3 n\big)\lefteqn{,}
	\end{multline}
	where we keep the notations of Lemma \ref{lem:f(UL V^*) expansion} and $\wedge^2 B$ is the second exterior power of a linear operator $B$ (see \cite{Vi:03} for the definition and properties of an exterior power of a linear operator). 
\end{lemma}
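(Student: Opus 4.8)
The plan is to show that, compared with Lemma~\ref{lem:f(UL V^*) expansion}, the only new contribution to the expansion comes from the $\tilde h$-correction to $h$ in \eqref{h def}. First I would write $f = -\langle \cSetMatA, \cSetMatA \rangle + \log \Pf F(\cSetMatA_1) + \log\bigl(1 + (n^{-1/2}\tilde h + n^{-1}\mathtt{p}_c)/\Pf F\bigr)$. Evaluated at $\cMatA_1 = U(\Lambda_0 + n^{-1/2}\tilde{\Lambda})\cConjMat{V}$, $\hat{\cSetMatA} = n^{-1/2}\hat{\tilde{\cSetMatA}}$ on the region $\normsized{\tilde{\Lambda}} + \normsized{\hat{\tilde{\cSetMatA}}} \le \log n$, one has $\Pf F = 1 + O(n^{-1/2}\log n)$ since $\det F_0 = 1$; moreover $\tilde h$ of \eqref{tilde h def} is linear in $\cSetMatA_2 = n^{-1/2}\hat{\tilde{\cSetMatA}}_2$ with coefficients bounded on this region, so $\tilde h = O(n^{-1/2}\log n)$, while $\mathtt{p}_c(\cSetMatA_1, 0) = 0$ makes $n^{-1}\mathtt{p}_c = O(n^{-3/2}\log^{C} n)$. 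Hence the last logarithm equals $n^{-1/2}\tilde h + O(n^{-3/2}\log^{C} n)$, and freezing the arguments of the block matrix $F$ inside $\tilde h$ at their leading values $\cMatA_1 = U\Lambda_0\cConjMat{V}$, $Z = z_0 I$, $B_{2,0} = B_{0,2} = 0$ costs a further $O(n^{-3/2}\log^{C} n)$. The term $\log \Pf F = \frac{1}{2}\log\det F$ is expanded exactly as in the proof of Lemma~\ref{lem:f(UL V^*) expansion} (that argument used $\log\det F$ only), and reproduces every term of \eqref{f expansion gen} except the last; so it remains to compute $n^{-1/2}\tilde h$ with $F$ frozen at the reference point.

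For this I would reuse the block conjugation of the proof of Lemma~\ref{lem:f(UL V^*) expansion}: at the reference point $F = D F_0 \transp{D}$ with $D = \diag\{U, V, \conj{U}, \conj{V}\}$ and $F_0$ as in \eqref{F_0,F_1 def} (with $B = 0$). The induced Grassmann-linear substitution $\aVecD \mapsto \transp{D}\aVecD$ in \eqref{tilde h def}, whose Berezinian is $1$ because $\abs{\det U} = \abs{\det V} = 1$, turns $e^{-\frac{1}{2}\transp{\aVecD}F\aVecD}$ into $e^{-\frac{1}{2}\transp{\aVecD}F_0\aVecD}$. Since every monomial occurring in $\tilde Y_{p,s}$ and $Y_{p,s}$ in \eqref{chi def} is a product of $p$ generators among the $\aSclB_k$, $\aConjScl{\aSclB_k}$ and $s$ among the $\aSclC_k$, $\aConjScl{\aSclC_k}$, the same substitution transforms $\tilde Y_{p,s}$ and $Y_{p,s}$ by the $p$-th, resp.\ $s$-th, exterior power of the appropriate block of $D$. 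Thus $\tilde h$ becomes a finite sum over $p + s = 4$ of traces of $\cMatA_{p,s}$, $\cConjMat{\cMatA_{p,s}}$ against the $\wedge^p$-, $\wedge^s$-conjugated Gaussian Berezin averages $\langle \tilde Y_{p,s} \rangle_{F_0}$, $\langle Y_{p,s} \rangle_{F_0}$.

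The last step is to evaluate those averages. The quadratic form $\transp{\aVecD}F_0\aVecD$ splits into a sum over $j \in \{1, \dotsc, m\}$ of $2 \times 2$-block forms (as $z_0 I$ and $\Lambda_0 = \lambda_0 I$ are scalar), and the Berezin integral of $e^{-\frac{1}{2}\transp{\aVecD}F_0\aVecD}$ restricted to any subset of indices equals the corresponding Pfaffian minor, which by the block structure is a product of factors $\abs{z_0}^2 + \lambda_0^2 = 1$. A short case check then shows: for $(p,s) \ne (2,2)$ the monomials of $\tilde Y_{p,s}$ always leave some index $j$ carrying a single Grassmann generator, which the pairs $\aConjScl{\aSclB_j}\aSclB_j$, $\aConjScl{\aSclB_j}\aSclC_j$, $\aConjScl{\aSclC_j}\aSclB_j$, $\aConjScl{\aSclC_j}\aSclC_j$ produced by the exponential cannot complete to all four, so $\langle \tilde Y_{p,s} \rangle_{F_0} = \langle Y_{p,s} \rangle_{F_0} = 0$; whereas for $(p,s) = (2,2)$ the average is nonzero only when the two $\indexset_{m,2}$-indices of $\cMatA_{2,2}$ coincide, the two matched indices each supplying the off-diagonal entry $\pm\lambda_0$ of their $2 \times 2$ block and the remaining $m - 2$ indices supplying $1$, so the $\wedge^2$-conjugated average is $\pm\lambda_0^2\sqrt{\cumul{2}{2}}$ times a second exterior power of a product of unitaries. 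Regrouping by $\wedge^2(AB) = \wedge^2 A\,\wedge^2 B$, inserting $\cMatA_{2,2} = n^{-1/2}\tilde{\cMatA}_2$, and using the overall minus sign in \eqref{tilde h def} then produces precisely the last term $n^{-1}\lambda_0^2\sqrt{\cumul{2}{2}} \tr\bigl[(\wedge^2 V\cConjMat{U})\tilde{\cMatA}_2 + \cConjMat{\tilde{\cMatA}_2}(\wedge^2 U\cConjMat{V})\bigr]$ of \eqref{f expansion gen}; all bounds above pass through $D$-invariant norms, hence are uniform in $U$ and $V$. I expect the main difficulty to be exactly this last step: the sign bookkeeping through the Grassmann change of variables, the exterior powers, and the extraction of the complementary monomial in the Berezin integral, together with checking that the surviving contraction is diagonal on the $\indexset_{m,2}$-index space, so that the answer is a trace of second exterior powers and not something mixing the pair-indices.
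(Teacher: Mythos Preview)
Your proposal is correct and follows essentially the same route as the paper: isolate the new contribution $n^{-1/2}\tilde h$, freeze $F$ at the reference point $U\Lambda_0\cConjMat{V}$, perform the Grassmann change of variables induced by $D=\diag\{U,V,\conj U,\conj V\}$ (Berezinian $1$), note that the monomials $Y_{p,s}$, $\tilde Y_{p,s}$ transform via exterior powers, and then evaluate the factorised Berezin integrals to see that only $(p,s)=(2,2)$ with coinciding $\indexset_{m,2}$-indices survives and contributes $\pm\lambda_0$ per index. The paper carries out these same steps, writing out the three elementary Grassmann integrals explicitly rather than phrasing them as Pfaffian minors, but the argument and the sign/exterior-power bookkeeping you flag as the delicate part are identical.
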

\begin{proof}
	Differently from the previous subsection the function $f$ has an additional term $n^{-1/2} \tilde{h}(\cSetMatA_1, \cSetMatA_2) + n^{-1}\mathtt{p}_c(\hat{\cSetMatA})$ under the logarithm
	, where $\tilde{h}$ is defined in~\eqref{tilde h def} and $\mathtt{p}_c$ is a polynomial such that $\mathtt{p}_c(0) = 0$. 
	Therefore, the contribution of the term $n^{-1}\mathtt{p}_c(n^{-1/2}\hat{\tilde{\cSetMatA}})$ is $O\big(n^{-3/2}\log n\big)$. 
	Hence, it remains to determine the contribution of the term $n^{-1/2} \tilde{h}(\cSetMatA_1, \cSetMatA_2)$.
	
	\begin{multline}\label{tilde h before cv}
		n^{-\frac{1}{2}} \tilde{h}(n^{-\frac{1}{2}}\tilde{\cSetMatA}_1, n^{-\frac{1}{2}}\tilde{\cSetMatA}_2) = n^{-1} \tilde{h}(n^{-\frac{1}{2}}\tilde{\cSetMatA}_1, \tilde{\cSetMatA}_2) \\
		= -\frac{1}{n}\int \sum\limits_{p + s = 4} \left(\tr \tilde{\matA}_{p,s} \tilde{\cMatA}_{p,s} + \tr \cConjMat{\tilde{\cMatA}_{p,s}} \matA_{p,s}\right) e^{-\frac{1}{2} \transp{\aVecD}F\left(\frac{1}{\sqrt{n}}\cSetMatA_1\right)\aVecD} d\aConjMat{\aVecBt} d\aVecBt d\aConjMat{\aVecCt} d\aVecCt,
	\end{multline}
	where $\aVecD$ is defined in \eqref{rho def}, $F$ is defined in \eqref{A def}, $\tilde{\matA}_{p,s}$ and $\matA_{p,s}$ are defined by \eqref{chi def}. 
	
	Let us change the variables $\tilde{\aVecB} = \cConjMat{U}\aVecBt$, $\aConjMat{\tilde{\aVecB}} = \aConjMat{\aVecBt}U$, $\tilde{\aVecC} = \cConjMat{V}\aVecCt$, $\aConjMat{\tilde{\aVecC}} = \aConjMat{\aVecCt}V$. We have
	\begin{equation}\label{cv1}
	\begin{split}
	\frac{1}{\sqrt{\cumul{p}{s}}}\,\sclA_{\alpha\beta}^{(p,s)} &= \prod\limits_{q = 1}^{p} \aSclB_{\alpha_{q}} \prod\limits_{r = 1}^{s} \aConjScl{\aSclC_{\beta_{r}}}  = \prod\limits_{q = 1}^{p} (U\tilde{\aVecB})_{\alpha_{q}} \prod\limits_{r = 1}^{s} (\aConjMat{\tilde{\aVecC}}\cConjMat{V})_{\beta_{r}} \\
	&= \prod\limits_{q = 1}^{p} \sum\limits_{\gamma_q = 1}^m u_{\alpha_q\gamma_q} \tilde{\aSclB}_{\gamma_{q}} \prod\limits_{r = 1}^{s} \sum\limits_{\delta_r = 1}^m \aConjScl{\tilde{\aSclC}_{\delta_r}} \cConjScl{v}_{\beta_{r}\delta_r} \\
	&=: \sum\limits_{\gamma \in \indexset_{m, p}} \sum\limits_{\delta \in \indexset_{m, s}} \cSclB_{\alpha\beta\gamma\delta}^{(p,s)} \prod\limits_{q = 1}^{p} \tilde{\aSclB}_{\gamma_{q}} \prod\limits_{r = 1}^{s} \aConjScl{\tilde{\aSclC}_{\delta_r}},
	\end{split}
	\end{equation}
	where $\cSclB_{\alpha\beta\gamma\delta}^{(p,s)}$ just denotes the coefficient at $\prod\limits_{q = 1}^{p} \tilde{\aSclB}_{\gamma_{q}} \prod\limits_{r = 1}^{s} \aConjScl{\tilde{\aSclC}_{\delta_r}}$.
	Similarly
	\begin{equation}\label{cv2}
	\frac{1}{\sqrt{\conj{\cumul{p}{s}}}}\,\tilde{\sclA}_{\beta\alpha}^{(p,s)} = \sum\limits_{\gamma \in \indexset_{m, p}} \sum\limits_{\delta \in \indexset_{m, s}} \tilde{\cSclB}_{\beta\alpha\delta\gamma}^{(p,s)}  \prod\limits_{r = s}^{1} \tilde{\aSclC}_{\delta_r} \prod\limits_{q = p}^{1} \aConjScl{\tilde{\aSclB}_{\gamma_{q}}}
	\end{equation}
	Besides,
	\begin{equation}\label{cv3}
	\transp{\aVecD}F\aVecD = \transp{\tilde{\aVecD}}F_0\tilde{\aVecD} + O(n^{-1/2}\log n),
	\end{equation}
	where $F_0$ is defined in \eqref{F_0,F_1 def} and
	\begin{align*}
		\tilde{\aVecD} = 
		\begin{pmatrix}
			\transp{\bigl(\aConjMat{\tilde{\aVecB}}\bigr)} \\
			\transp{\bigl(\aConjMat{\tilde{\aVecC}}\bigr)} \\
			\tilde{\aVecB} \\
			\tilde{\aVecC}
		\end{pmatrix}.
	\end{align*}
	The ``measure'' changes as follows 
	\begin{equation}\label{cv_diff}
	\begin{split}
	d\aConjMat{\aVecBt} d\aVecBt d\aConjMat{\aVecCt} d\aVecCt &= \det\nolimits^{-1} U \det\nolimits^{-1} \cConjMat{U} \det\nolimits^{-1} V \det\nolimits^{-1} \cConjMat{V} d\aConjMat{\tilde{\aVecB}} d\tilde{\aVecB} d\aConjMat{\tilde{\aVecC}} d\tilde{\aVecC} \\
	&= d\aConjMat{\tilde{\aVecB}} d\tilde{\aVecB} d\aConjMat{\tilde{\aVecC}} d\tilde{\aVecC}.
	\end{split}
	\end{equation}
	Eventually, substitution of \eqref{cv1}--\eqref{cv_diff} into \eqref{tilde h before cv} yields
	\begin{equation} \label{tilde h after cv}
	\begin{split}
	n^{-1} \tilde{h}(n^{-\frac{1}{2}}\tilde{\cSetMatA}_1, \tilde{\cSetMatA}_2) &= -\frac{1}{n} \int e^{-\frac{1}{2}\aConjMat{\tilde{\aVecD}}F_0\tilde{\aVecD}} d\aConjMat{\tilde{\aVecB}} d\tilde{\aVecB} d\aConjMat{\tilde{\aVecC}} d\tilde{\aVecC} \\
	&\qquad{}\times \sum\limits_{p + s = 4} \sum\limits_{\substack{\alpha, \gamma \in \indexset_{m,p} \\ \beta, \delta \in \indexset_{m,s}}} \bigg( \sqrt{\conj{\cumul{p}{s}}} \tilde{\cSclB}_{\beta\alpha\delta\gamma}^{(p,s)} \tilde{\cSclA}_{\alpha\beta}^{(p,s)} \prod\limits_{r = s}^{1} \tilde{\aSclC}_{\delta_r} \prod\limits_{q = p}^{1} \aConjScl{\tilde{\aSclB}_{\gamma_{q}}} \\
	&\qquad{}+ \sqrt{\cumul{p}{s}} \cConjScl{\tilde{\cSclA}}_{\alpha\beta}^{(p,s)} \cSclB_{\alpha\beta\gamma\delta}^{(p,s)} \prod\limits_{q = 1}^{p} \tilde{\aSclB}_{\gamma_{q}} \prod\limits_{r = 1}^{s} \aConjScl{\tilde{\aSclC}_{\delta_r}} \bigg) \\
	&\qquad{}+ O\big(n^{-3/2}\log^3 n\big)
	\end{split}
	\end{equation}
	uniformly in $U$ and $V$.
	
	The integration in \eqref{tilde h after cv} can be performed over $\tilde{\aSclB}_{j}$, $\tilde{\aSclC}_{j}$ separately for every~$j$ due to the structure of $F_0$. Thus it remains to compute the integrals of the form
	\begin{equation*}
		\int \prod\limits_{q = 1}^{p} \tilde{\aSclB}_{\gamma_{q}} \prod\limits_{r = 1}^{s} \aConjScl{\tilde{\aSclC}_{\delta_r}} \exp\left\{\conj{z}_0\tilde{\aSclC}_j\aConjScl{\tilde{\aSclC}_j} + \lambda_0 \tilde{\aSclC}_j\aConjScl{\tilde{\aSclB}_j} +z_0\tilde{\aSclB}_j\aConjScl{\tilde{\aSclB}_j} - \lambda_0 \tilde{\aSclB}_j\aConjScl{\tilde{\aSclC}_j}\right\} d\aConjScl{\tilde{\aSclB}_{j}} d\tilde{\aSclB}_{j} d\aConjScl{\tilde{\aSclC}_{j}} d\tilde{\aSclC}_{j}
	\end{equation*}
	Furthermore, expanding the exponent into series, one can observe that all the integrals are non-zero only if $p = s = 2$ and $\gamma = \delta$. Moreover,
	\begin{equation*}
		\begin{split}
			\int \tilde{\aSclB}_{j}\aConjScl{\tilde{\aSclC}_{j}} e^{{z_0\tilde{\aSclC}_j\aConjScl{\tilde{\aSclC}_j} + \lambda_0 \tilde{\aSclC}_j\aConjScl{\tilde{\aSclB}_j} +z_0\tilde{\aSclB}_j\aConjScl{\tilde{\aSclB}_j} - \lambda_0 \tilde{\aSclB}_j\aConjScl{\tilde{\aSclC}_j}}} d\aConjScl{\tilde{\aSclB}_{j}} d\tilde{\aSclB}_{j} d\aConjScl{\tilde{\aSclC}_{j}} d\tilde{\aSclC}_{j} &= -\lambda_0,
			\\
			\int \tilde{\aSclC}_{j}\aConjScl{\tilde{\aSclB}_{j}} e^{{z_0\tilde{\aSclC}_j\aConjScl{\tilde{\aSclC}_j} + \lambda_0 \tilde{\aSclC}_j\aConjScl{\tilde{\aSclB}_j} +z_0\tilde{\aSclB}_j\aConjScl{\tilde{\aSclB}_j} - \lambda_0 \tilde{\aSclB}_j\aConjScl{\tilde{\aSclC}_j}}} d\aConjScl{\tilde{\aSclB}_{j}} d\tilde{\aSclB}_{j} d\aConjScl{\tilde{\aSclC}_{j}} d\tilde{\aSclC}_{j} &= \lambda_0,
			\\
			\int e^{{z_0\tilde{\aSclC}_j\aConjScl{\tilde{\aSclC}_j} + \lambda_0 \tilde{\aSclC}_j\aConjScl{\tilde{\aSclB}_j} +z_0\tilde{\aSclB}_j\aConjScl{\tilde{\aSclB}_j} - \lambda_0 \tilde{\aSclB}_j\aConjScl{\tilde{\aSclC}_j}}} d\aConjScl{\tilde{\aSclB}_{j}} d\tilde{\aSclB}_{j} d\aConjScl{\tilde{\aSclC}_{j}} d\tilde{\aSclC}_{j} &= 1.
		\end{split}
	\end{equation*}
	The last thing we need is values of $\cSclB_{\alpha\beta\gamma\delta}^{(2,2)}$ and $\tilde{\cSclB}_{\beta\alpha\delta\gamma}^{(2,2)}$. The formula~\eqref{cv1} implies
	\begin{equation*}
	\cSclB_{\alpha\beta\gamma\delta}^{(2,2)} = (\wedge^2 U)_{\alpha\gamma} (\wedge^2 \cConjMat{V})_{\delta\beta}.
	\end{equation*}
	Similarly
	\begin{equation*}
	\tilde{\cSclB}_{\beta\alpha\delta\gamma}^{(2,2)} = (\wedge^2 V)_{\beta\gamma} (\wedge^2 \cConjMat{U})_{\delta\alpha}.
	\end{equation*}
	Finally
	\begin{equation*}
		\begin{split}
			\frac{1}{n} 
			\tilde{h}(n^{-\frac{1}{2}}\tilde{\cSetMatA}_1, \tilde{\cSetMatA}_2) &= 
			\frac{1}{n} 
			\lambda_0^2\sqrt{\cumul{2}{2}} (\tr(\wedge^2\cConjMat{U})\tilde{Q}_2(\wedge^2V) + \tr(\wedge^2\cConjMat{V})\cConjMat{\tilde{Q}_2}(\wedge^2U)) + o\left(\frac{1}{n} 
			\right) \\
			&= n^{-1}\lambda_0^2\sqrt{\cumul{2}{2}} (\tr(\wedge^2V\cConjMat{U})\tilde{Q}_2 + \tr\cConjMat{\tilde{Q}_2}(\wedge^2U\cConjMat{V})) \\
			&\qquad{}+ O\big(n^{-3/2}\log^3 n\big).
		\end{split}
	\end{equation*}
	The above relation completes the proof of \eqref{f expansion gen}.
\end{proof}
Lemma~\ref{lem:est for Re f} is still valid in the general case, despite the proof needs some insignificant changes due to a non-zero term $n^{-1/2} \tilde{h}(\cSetMatA_1, \cSetMatA_2)$.

Following the proof in the Gaussian case one can see that \eqref{n-indep int} transforms into
\begin{equation*}
\begin{split}
\CF_m &= C\mathsf{k}_n \int\limits_{\sqrt{n}\stpointsnbh} \Vanddet^2(\tilde{\Lambda}) g(\cSetMatA_*) e^{-c_m f(\Lambda_0, 0)} d\mu(U) d\mu(V) d\tilde{\Lambda} d\hat{\tilde{\cSetMatA}} (1 + o(1)) \\
&\quad \times \exp\Bigl\lbrace -\frac{1}{2}\tr(2\lambda_0\tilde{\Lambda} + \bar{z}_0\cMatPos_U + z_0\cMatPos_\cConjMat{V})^2 + \tr\cMatPos_U\cMatPos_\cConjMat{V} - \norm{\tilde{\cSetMatA}_{>1}}^2 \\
&\quad\phantom{\times \exp\Bigl\lbrace} - \frac{1}{2}\tr\Bigl[(1 - \abs{\seccumul}\conj{z}_0^2)\cConjMat{\tilde{B}_{2,0}}\tilde{B}_{2,0} + (1 - \abs{\seccumul}z_0^2)\cConjMat{\tilde{B}_{0,2}}\tilde{B}_{0,2} \\
&\quad\phantom{\times \exp\Bigl\lbrace - \frac{1}{2}\tr\Bigl[} - \abs{\seccumul}\lambda_0^2\tilde{B}_{0,2}\tilde{B}_{2,0} - \abs{\seccumul}\lambda_0^2\cConjMat{\tilde{B}_{2,0}}\cConjMat{\tilde{B}_{0,2}}\Bigr] \\
&\quad\phantom{\times \exp\Bigl\lbrace} + \lambda_0^2\sqrt{\cumul{2}{2}} \tr\bigl[(\wedge^2V\cConjMat{U})\tilde{\cMatA}_2 + \cConjMat{\tilde{\cMatA}_2}(\wedge^2U\cConjMat{V})\bigr]\Bigr\rbrace,
\end{split}
\end{equation*}
where $\mathsf{k}_n$ is defined in \eqref{K_n def}. The Gaussian integration over $\hat{\tilde{\cSetMatA}}$ yields
\begin{equation*}
\begin{split}
\CF_m &= C\mathsf{k}_n d_1(\seccumul)^{-m} \int \Vanddet^2(\tilde{\Lambda}) d\mu(U) d\mu(V) d\tilde{\Lambda} (1 + o(1)) \\
&\qquad\qquad\quad \times \exp\Bigl\lbrace -\frac{1}{2}\tr(2\lambda_0\tilde{\Lambda} + \bar{z}_0\cMatPos_U + z_0\cMatPos_\cConjMat{V})^2 + \tr\cMatPos_U\cMatPos_\cConjMat{V} \\
&\qquad\qquad\quad\phantom{\times \exp\Bigl\lbrace} + \lambda_0^4\cumul{2}{2} \tr\bigl[(\wedge^2V\cConjMat{U})(\wedge^2U\cConjMat{V})\bigr] \Bigr\rbrace.
\end{split}
\end{equation*}
Note that $\wedge^2V\cConjMat{U}$ and $\wedge^2U\cConjMat{V}$ are mutually inverse matrices. 
Therefore,
\begin{equation*}
\begin{split}
\CF_m &= C\mathsf{k}_n d_1(\seccumul)^{-m}\exp\left\{\frac{m^2 - m}{2}\lambda_0^4\cumul{2}{2} \right\} \int \Vanddet^2(\tilde{\Lambda}) d\mu(U) d\mu(V) d\tilde{\Lambda} (1 + o(1)) \\
&\qquad\qquad\quad \times \exp\Bigl\lbrace -\frac{1}{2}\tr(2\lambda_0\tilde{\Lambda} + \bar{z}_0\cMatPos_U + z_0\cMatPos_\cConjMat{V})^2 + \tr\cMatPos_U\cMatPos_\cConjMat{V} \Bigr\rbrace.
\end{split}
\end{equation*}
The last formula differs from~\eqref{n-indep int 1} only by a factor $\exp\left\{\frac{m^2 - m}{2}\lambda_0^4\realcumul_4\right\}$. Hence, 
there are no differences in further proof up to 
this factor.



\section*{Acknowledgments.} The author is grateful to Prof.\ M.\ Shcherbina for the statement of the problem and fruitful discussions.


\subsection*{Supports.} The author is supported in part by the Akhiezer Foundation scholarship and by the NASU scholarship for young scientists.







\begin{thebibliography}{99}

\bibitem{Af:16}
I.~Afanasiev,
\newblock {\emph{On the Correlation Functions of the Characteristic Polynomials of
the Sparse Hermitian Random Matrices,}}
\newblock {J. Stat. Phys.} \textbf{163(2)} (2016), 324--356.

\bibitem{Af:19}
I.~Afanasiev,
\newblock {\emph{On the Correlation Functions of the Characteristic Polynomials of
  Non-Hermitian Random Matrices with Independent Entries,}}
\newblock {J. Stat. Phys.} \textbf{176(6)} (2019), 1561--1582.

\bibitem{Af:20}
I.~Afanasiev,
\newblock {\emph{On the Correlation Functions of the Characteristic Polynomials of
  Real Random Matrices with Independent Entries,}}
\newblock {Zh. Mat. Fiz. Anal. Geom.} \textbf{16(2)} (2020), 91--118.

\bibitem{Ak-Ka:07}
G.~Akemann and E.~Kanzieper,
\newblock {\emph{Integrable structure of Ginibre's ensemble of real random matrices
  and a Pfaffian integration theorem,}}
\newblock {J. Stat. Phys.} \textbf{129(5-6)} (2007), 1159--1231.

\bibitem{Ak-Ve:03}
G.~Akemann and G.~Vernizzi, \emph{Characteristic polynomials of complex random matrix
  models,}
\newblock Nucl. Phys. B \textbf{660(3)} (2003), 532--556

\bibitem{Ba-Er:17}
Z. Bao and L. Erd{\H o}s, \emph{Delocalization for a class of random block band
  matrices,}
\newblock Probab. Theory Relat. Fields \textbf{167} (2017), 673--776

\bibitem{Be:87}
F.~A. Berezin,
\newblock {\emph{Introduction to superanalysis}},
\newblock Number~9 in Math. Phys. Appl. Math. D.~Reidel Publishing Co.,
  Dordrecht, 1987.
\newblock Edited and with a foreword by A. A. Kirillov. With an appendix by V.
  I. Ogievetsky. Translated from the Russian by J. Niederle and R. Koteck\'y.
  Translation edited by Dimitri Le\u{\i}tes.

\bibitem{Bo-Ch:12}
C.~Bordenave and D.~Chafa\"\i,
\newblock \emph{Around the circular law,}
\newblock {Probab. Surv.} \textbf{9} (2012), 1--89.

\bibitem{Bo-Si:09}
A.~Borodin and C.~D. Sinclair,
\newblock {\emph{The Ginibre Ensemble of Real Random Matrices and its Scaling
  Limits,}}
\newblock {Comm. Math. Phys.} \textbf{291} (2009), 177--224.

\bibitem{Bo-St:06}
A.~Borodin and E.~Strahov,
\newblock \emph{Averages of characteristic polynomials in random matrix theory,}
\newblock {Comm. Pure Appl. Math.} \textbf{59(2)} (2006), 161--253.

\bibitem{Br-Hi:00}
E.~Br\'ezin and S.~Hikami,
\newblock \emph{Characteristic polynomials of random matrices,}
\newblock {Comm. Math. Phys.} \textbf{214} (2000), 111--135.

\bibitem{Br-Hi:01}
E.~Br\'ezin and S.~Hikami.
\newblock \emph{Characteristic polynomials of real symmetric random matrices,}
\newblock {Comm. Math. Phys.} \textbf{223} (2001), 363--382.

\bibitem{Ci-Er-Sc:20}
G.~Cipolloni, L.~Erd{\H o}s and D.~Schr\"oder,
\newblock \emph{Optimal lower bound on the least singular value of the shifted
  Ginibre ensemble,}
\newblock {Prob. Math. Physics} \textbf{1} (2020), 101--146, \arXiv{1908.01653v3}.

\bibitem{Ci-Er-Sc:21_cCLT}
G. Cipolloni, L. Erd{\H o}s and D. Schr\"oder, \emph{Central limit theorem for
  linear eigenvalue statistics of non-Hermitian random matrices,}
\newblock Probab. Theory Related Fields \textbf{179} (2021), 1--28, \arXiv{1912.04100}.

\bibitem{Ci-Er-Sc:21_rCLT}
G.~Cipolloni, L.~Erd{\H o}s and D.~Schr\"oder,
\newblock \emph{Fluctuation around the circular law for random matrices with real entries,}
\newblock {Electron. J. Prob.}, \textbf{24} (2021), Paper No. 24, \arXiv{2002.02438v1}.

\bibitem{Ci-Er-Sc:22}
G.~Cipolloni, L.~Erd{\H o}s and D.~Schr\"oder,
\newblock \emph{Edge universality for non-Hermitian random matrices,}
\newblock to appear in Comm. Pure Appl. Math. (2022), DOI \href{https://doi.org/10.1002/cpa.22028}{10.1002/cpa.22028}, \arXiv{1908.00969v2}. 

\bibitem{Di-La:17}
M. Disertori and M. Lager, \emph{Density of States for Random Band Matrices in Two
  Dimensions},
\newblock Ann. Henri Poincar\'e \textbf{18} (2017), 2367--2413

\bibitem{Di-La:20}
M. Disertori and M. Lager, \emph{Supersymmetric Polar Coordinates with applications
  to the Lloyd model},
\newblock Math. Phys. Anal. Geom. \textbf{23(1)} (2020), Paper No. 2, 21 pp.

\bibitem{Di-Lo-So:21}
M.~Disertori, M.~Lohmann and S.~Sodin,
\newblock \emph{The density of states of 1D random band matrices via a
  supersymmetric transfer operator,}
\newblock J. Spectr. Theory \textbf{11(1)} (2021), 125--191

\bibitem{Di-Me-Ro:14}
M. Disertori, F. Merkl and S. Rolles, \emph{Localization for a Nonlinear Sigma Model
  in a Strip Related to Vertex Reinforced Jump Processes},
\newblock Commun. Math. Phys. \textbf{332} (2014), 783--825

\bibitem{Di-Sp-Zi:10}
M.~Disertori, T.~Spencer and M.~R. Zirnbauer,
\newblock \emph{Quasi-diffusion in a 3D supersymmetric hyperbolic sigma model,}
\newblock {Comm. Math. Phys.} \textbf{300(2)} (2010), 435--486.

\bibitem{Ed:97}
A. Edelman, \emph{The probability that a random real Gaussian matrix has k real eigenvalues, related distributions, and the circular law}.
\newblock J. Multivariate Anal. \textbf{60(2)} (1997), 203--232

\bibitem{Ef:98}
K.~Efetov,
\newblock \emph{Supersymmetry in disorder and chaos},
\newblock Cambridge University Press, Cambridge, 1997.

\bibitem{Ef:83}
K.~B. Efetov,
\newblock \emph{Supersymmetry and theory of disordered metals,}
\newblock {Adv. in Physics} \textbf{32(1)} (1983), 53--127.

\bibitem{Fo-Na:07}
P. Forrester and T. Nagao, \emph{Eigenvalue statistics of the real Ginibre ensemble,}
\newblock Phys. Rev. Lett. \textbf{99} (2007), 050603

\bibitem{Fo:99}
P.~J.~Forrester, \emph{Fluctuation formula for complex random matrices,}
\newblock J. Phys. A \textbf{32(13)} (1999), L159--L163

\bibitem{Fy:02}
Y.~V.~Fyodorov, \emph{Negative moments of characteristic polynomials of random
  matrices: Ingham–Siegel integral as an alternative to
  Hubbard–Stratonovich transformation},
\newblock Nucl. Phys. B \textbf{621(3)} (2002), 643--674

\bibitem{Fy-Kh:99}
Y.~V.~Fyodorov and B.~A.~Khoruzhenko, \emph{Systematic Analytical Approach to
  Correlation Functions of Resonances in Quantum Chaotic Scattering},
\newblock Phys. Rev. Lett. \textbf{83(1)} (1999), 65--68

\bibitem{Fy-Mi:91}
Y.~V. Fyodorov and A.~D. Mirlin,
\newblock \emph{Localization in ensemble of sparse random matrices,}
\newblock {Phys. Rev. Lett.} \textbf{67} (1991), 2049--2052.

\bibitem{Fy-So:03}
Y.~V.~Fyodorov and  H.-J.~Sommers
\newblock \emph{Random matrices close to Hermitian or unitary: overview of methods
  and results},
\newblock {J. Phys. A} \textbf{36(12)} (2003), 3303--3347.

\bibitem{Fy-St:03}
Y.~V. Fyodorov and E.~Strahov,
\newblock \emph{An exact formula for general spectral correlation function of random
  Hermitian matrices. Random matrix theory},
\newblock {J. Phys. A} \textbf{36(12)} (2003), 3203--3214.

\bibitem{Gin:65}
J.~Ginibre,
\newblock \emph{Statistical ensembles of complex, quaternion, and real matrices,}
\newblock {J. Math. Phys.} \textbf{6} (1965), 440--449.

\bibitem{Gir:84}
V.~L. Girko,
\newblock \emph{The circular law,}
\newblock {Teor. Veroyatn. Primen.} \textbf{29(4)} (1984), 669--679.

\bibitem{Gir:94}
V.~L. Girko,
\newblock \emph{The circular law: ten years later,}
\newblock {Random Oper. Stoch. Equ.} \textbf{2(3)} (1994), 235--276.

\bibitem{Gir:04_1}
V.~L. Girko,
\newblock \emph{The strong circular law. Twenty years later. I,}
\newblock {Random Oper. Stoch. Equ.} \textbf{12(1)} (2004), 49--104.

\bibitem{Gir:04_2}
V.~L. Girko,
\newblock \emph{The strong circular law. Twenty years later. II,}
\newblock {Random Oper. Stoch. Equ.} \textbf{12(3)} (2004), 255--312.

\bibitem{Gir:05}
V.~L. Girko,
\newblock \emph{The circular law. Twenty years later. III,}
\newblock {Random Oper. Stoch. Equ.} \textbf{13(1)} (2005), 53--109.

\bibitem{Gu:15}
T.~Guhr,
\newblock \emph{Supersymmetry,}
\newblock {{The
  Oxford Handbook of Random Matrix Theory}} (eds. G.~Akemann, J.~Baik and P.~D. Francesco), Oxford
  university press, 2015, chapter~7, 135--154.

\bibitem{Hu:63}
L.~K.~Hua,
\newblock \emph{{Harmonic Analysis of Functions of Several Complex Variables in the Classical Domains}},
\newblock American Mathematical Society, Providence, RI, 1963.

\bibitem{Ko:15}
P.~Kopel,
\newblock \emph{Linear Statistics of Non-Hermitian Matrices Matching the Real or
  Complex Ginibre Ensemble to Four Moments,}
\newblock preprint, \arXiv{1510.02987v1} 

\bibitem{superbos}
P.~Littelmann, H.-J. Sommers and M.R. Zirnbauer,
\newblock \emph{Superbosonization of invariant random matrix ensembles,}
\newblock {Comm. Math. Phys.}, \textbf{283} (2008), 343--395.

\bibitem{Me:67}
M.~L. Mehta,
\newblock \emph{Random matrices and the statistical theory of energy levels},
\newblock Academic Press, New York--London, 1967.

\bibitem{Me:91}
M.~L. Mehta,
\newblock \emph{Random Matrices}, second edn.
\newblock Academic Press Inc., Boston, 1991.

\bibitem{Mi-Fy:91}
A.~D. Mirlin and Y.~V. Fyodorov,
\newblock \emph{Universality of level correlation function of sparse random matrices,}
\newblock {J. Phys. A} \textbf{24} (1991), 2273--2286.

\bibitem{OR-Re:16}
S.~O'Rourke and D.~Renfrew,
\newblock \emph{Central limit theorem for linear eigenvalue statistics of elliptic
  random matrices,}
\newblock {J. Theoret. Probab.} \textbf{29} (2016), 1121--1191.

\bibitem{Re-Ki-Gu-Zi:12}
C.~Recher, M.~Kieburg, T.~Guhr and M.~R. Zirnbauer,
\newblock \emph{Supersymmetry approach to Wishart correlation matrices: Exact
  results,}
\newblock {J. Stat. Phys.} \textbf{148(6)} (2012), 981--998.

\bibitem{Ri-Si:06}
B. Rider and J. Silverstein, \emph{Gaussian fluctuations for non-Hermitian random
  matrix ensembles}.
\newblock Ann. Probab. \textbf{34(6)} (2006), 2118--2143

\bibitem{Ri-Vi:07}
B. Rider and B. Virag, \emph{The noise in the circular law and the Gaussian free
  field}.
\newblock Int. Math. Res. Not. IMRN \textbf{2} (2007), Art. ID rnm006, 33 pp

\bibitem{Sha:13}
M. Shamis, \emph{Density of states for Gaussian unitary ensemble, Gaussian
  orthogonal ensemble, and interpolating ensembles through supersymmetric
  approach},
\newblock J. Math. Phys. \textbf{54} (2013), 113505

\bibitem{ShM-ShT:16}
M.~Shcherbina and T.~Shcherbina,
\newblock \emph{Transfer matrix approach to 1d random band matrices: density of
  states,}
\newblock {J. Stat. Phys.} \textbf{164(6)} (2016), 1233--1260.

\bibitem{ShM-ShT:17}
M.~Shcherbina and T.~Shcherbina,
\newblock \emph{Characteristic polynomials for 1D random band matrices from the
  localization side,}
\newblock {Comm. Math. Phys.} \textbf{351(3)} (2017), 1009--1044.

\bibitem{ShM-ShT:18}
M.~Shcherbina and T.~Shcherbina,
\newblock \emph{Universality for 1d random band matrices: sigma-model
  approximation,}
\newblock {J. Stat. Phys.} \textbf{172(2)} (2018), 627--664.

\bibitem{ShT:11}
T.~Shcherbina,
\newblock \emph{On the correlation function of the characteristic polynomials of the
  Hermitian Wigner ensemble,}
\newblock {Comm. Math. Phys.} \textbf{308} (2011), 1--21.

\bibitem{ShT:13}
T.~Shcherbina,
\newblock \emph{On the correlation functions of the characteristic polynomials of
  the Hermitian sample covariance matrices,}
\newblock {Probab. Theory Related Fields} \textbf{156} (2013), 449--482.

\bibitem{St-Fy:03}
E.~Strahov and Y.~V. Fyodorov,
\newblock \emph{Universal results for correlations of characteristic polynomials:
  Riemann-Hilbert approach,}
\newblock {Comm. Math. Phys.} \textbf{241(2-3)} (2003), 343--382.

\bibitem{Ta-Vu:10}
T.~Tao and V.~Vu,
\newblock \emph{Random matrices: universality of ESDs and the circular law,}
\newblock {Ann. Probab.} \textbf{38(5)} (2010), 2023--2065.
\newblock With an appendix by Manjunath Krishnapur.

\bibitem{Ta-Vu:15}
T.~Tao and V.~Vu,
\newblock \emph{Random matrices: universality of local spectral statistics of
  non-Hermitian matrices,}
\newblock {Ann. Probab.} \textbf{43(2)} (2015), 782--874.

\bibitem{Vi:03}
E.~B. Vinberg,
\newblock \emph{{A Course in Algebra}},
\newblock American Mathematical Society, Providence, RI, 2003.

\bibitem{We-Wo:19}
C.~Webb and M.D.~Wong, \emph{On the moments of the characteristic polynomial of a
  Ginibre random matrix},
\newblock Proc. Lond. Math. Soc. (3) \textbf{118(5)} (2019), 1017--1056

\bibitem{Zi:06}
M.~R.~Zirnbauer, \emph{The supersymmetry method of random matrix theory.}
\newblock In: Encyclopedia of mathematical physics, vol.~5, pp. 151--160.
  Elsevier (2006).
\newblock \arXiv{math-ph/0404057}



\end{thebibliography}
\end{document}